%% LyX 1.6.8 created this file.  For more info, see http://www.lyx.org/.
%% Do not edit unless you really know what you are doing.
\documentclass[english]{article}
\usepackage[T1]{fontenc}
\usepackage[latin9]{inputenc}
\usepackage{amsthm}
\usepackage{amsmath}
\usepackage{graphicx}
\usepackage{setspace}
\usepackage{amssymb}
\usepackage{esint}
\usepackage[numbers]{natbib}
\setstretch{1.7}

\makeatletter
%%%%%%%%%%%%%%%%%%%%%%%%%%%%%% Textclass specific LaTeX commands.
  \theoremstyle{plain}
  \newtheorem{lem}{Lemma}
\theoremstyle{plain}
\newtheorem{thm}{Theorem}
  \theoremstyle{plain}
  \newtheorem{prop}{Proposition}
  \theoremstyle{plain}
  \newtheorem{cor}{Corollary}

%%%%%%%%%%%%%%%%%%%%%%%%%%%%%% User specified LaTeX commands.
\usepackage{fullpage}
\date{}

\newcounter{hypA}
\newenvironment{condition}{\refstepcounter{hypA}\begin{itemize}
\item[({\bf A\arabic{hypA}})]}{\end{itemize}}

\newcounter{mycase}

\newcounter{mylemma}

\AtBeginDocument{
  
}

\makeatother

\usepackage{babel}

\begin{document}

\title{Sequential Monte Carlo samplers: error bounds and insensitivity to
initial conditions}

\author{Nick Whiteley%
\thanks{School of Mathematics, University of Bristol, University Walk, Bristol,
BS8 1TW, UK%
}}
\maketitle
\begin{abstract}
This paper addresses finite sample stability properties of sequential
Monte Carlo methods for approximating sequences of probability distributions.
The results presented herein are applicable in the scenario where
the start and end distributions in the sequence are fixed and the
number of intermediate steps is a parameter of the algorithm. Under
assumptions which hold on non-compact spaces, it is shown that the
effect of the initial distribution decays exponentially fast in the
number of intermediate steps and the corresponding stochastic error
is stable in $\mathbb{L}_{p}$ norm. 
\end{abstract}

Keywords: non-compact spaces; unbounded functions; sequential Monte
Carlo

AMS Subject Classification: 82C80;60J05;

\section{Introduction \label{sec:Introduction}}

Sequential Monte Carlo (SMC) methods are a class of stochastic algorithms
for approximating sequences of probability measures using a population
of $N$ particles. They have been adopted in a variety of application
domains, including rare event analysis \citep{smc:theory:DMG05},
statistical physics \citep{smc:meth:RS06}, optimal filtering \citep{smc:meth:DDG2001,hmm:theory:CMR05}
and computational statistics \citep{smc:meth:C02,smc:meth:DDJ06}.
Various theoretical properties of SMC methods have been studied, and
in various contexts, see amongst others \citep{smc:the:C04,smc:the:K05,smc:the:DM08,smc:the:vH09},
and the seminal work of \citet{smc:theory:Dm04}. Existing stability
results for SMC methods rely on very strong assumptions which typically
do not hold on non-compact spaces. This article is concerned with
establishing stability properties for a class of SMC algorithms primarily
motivated by those of \citet{smc:meth:DDJ06}, under assumptions which
do hold on non-compact spaces. The following example indicates a scenario
of interest.

\subsection{A motivating example}

Let $\pi$ be a distribution on some space $\mathsf{X}$, admitting
a strictly positive and bounded density with respect to some dominating
measure. For ease of presentation, let $\pi$ also denote this density
and let $\bar{\pi}$ denote the corresponding unnormalised density,
i.e. for some $Z>0$, $\pi(x)=\bar{\pi}(x)/Z$. For some $\underline{\gamma}\in(0,1)$,
let $\gamma:[0,1]\rightarrow[\underline{\gamma},1]$ be a non-decreasing
Lipschitz function with $\gamma(0)=\underline{\gamma}$ and $\gamma(1)=1$.
For $n\in\mathbb{N}$ let $\left\{ G_{n,k};0\leq k<n\right\} $ be
the collection of potential functions on $\mathsf{X}$ defined by
$G_{n,k}(x)=\bar{\pi}^{\gamma(\left(k+1\right)/n)}(x)/\bar{\pi}^{\gamma(k/n)}(x)$.
Let $\left\{ M_{n,k};1\leq k\leq n\right\} $ be a collection of ergodic
Markov kernels also on $\mathsf{X}$, where $M_{n,k}$ admits as its
invariant distribution the probability with density proportional to
$\bar{\pi}^{\gamma(k/n)}(x)$ (denoted by $\pi_{\gamma(k/n)}$). For
some initial distribution $\mu$, consider the sequence of probability
distributions $\left\{ \eta_{n,k};1\leq k\leq n\right\} $ defined,
for a test function $f$, by\begin{eqnarray}
\eta_{n,k}\left(f\right) & := & \frac{\mathbb{E}_{\mu}\left[\prod_{k=0}^{n-1}G_{n,k}(X_{n,k})f(X_{n,n})\right]}{\mathbb{E}_{\mu}\left[\prod_{k=0}^{n-1}G_{n,k}(X_{n,k})\right]},\label{eq:fk_intro}\end{eqnarray}
where $\mathbb{E}_{\mu}$ denotes expectation with respect to the
law of the inhomogeneous Markov chain $\left\{ X_{n,k};0\leq k\leq n\right\} $,
such that $X_{0}\sim\mu$ and $X_{n,k+1}|X_{n,k}=x_{n,k}\sim M_{n,k+1}\left(x_{n,k},\cdot\right)$.
The special interest in (\ref{eq:fk_intro}) is that when $\mu=\pi_{\underline{\gamma}}$,
then $\eta_{n,n}=\pi$, for any $n\geq1$ . Furthermore, in applications
$Z$ is unknown, the distributions of the form (\ref{eq:fk_intro})
cannot be computed exactly, and one aims to obtain an approximation
of $\pi$ by fixing $n$ then approximating each of the $\left\{ \eta_{n,k};0\leq k\leq n\right\} $
in turn, as follows. Let $N\in\mathbb{N}$ and let $\left\{ \zeta_{n,k};0\leq k\leq n\right\} $
be an inhomogeneous Markov chain, with each $\zeta_{n,k}=\left\{ \xi_{n,k}^{i};1\leq i\leq N\right\} $
an $N$-tuplet and with each $\xi_{n,k}^{i}$ valued in $\mathsf{X}$,
with $\left\{ \xi_{n,0}^{i};1\leq i\leq N\right\} $ independent and
of common distribution $\mu$. Given $\zeta_{n,k}$, the $\left\{ \xi_{n,k+1}^{i};1\leq i\leq N\right\} $
are independent, with $\xi_{n,k+1}^{i}$ drawn from \[
\frac{\sum_{j=1}^{N}G_{n,k}\left(\xi_{n,k}^{j}\right)M_{n,k+1}\left(\xi_{n,k}^{j},\cdot\right)}{\sum_{j=1}^{N}G_{n,k}\left(\xi_{n,k}^{j}\right)}.\]
The particle approximation measure at time step $k$ is $\eta_{n,k}^{N}:=\frac{1}{N}\sum_{i=1}^{N}\delta_{\xi_{n,k}^{i}}$,
and one takes $\eta_{n,n}^{N}(f):=\frac{1}{N}\sum_{i=1}^{N}f(\xi_{n,n}^{i})$
as an approximation of $\pi(f)$. The expectation terms in (\ref{eq:fk_intro})
are in the shape of Feynman-Kac formulae, and adopting the terminology
of \citet{smc:theory:Dm04} throughout the following, a general collection
of Markov kernels, potential functions, initial distribution and associated
$\left\{ \eta_{n,k}\right\} $ are referred to as constituting a Feynman-Kac
(FK) model.

The FK model described above has a notable structural characteristic:
due to fact that $\underbar{\ensuremath{\gamma}}$ and $\pi$ are
fixed and $\gamma(\cdot)$ is continuous, the potential functions
each become flat as $n\rightarrow\infty$ (note that this is not an
essential feature of the general FK models considered by \citet{smc:theory:Dm04},
nor is it the regime usually considered in the filtering scenario).
One might then conjecture due to this {}``flattening'' property
that the sequence of measures $\left\{ \eta_{n,k};0\leq k\leq n\right\} $
inherit ergodicity properties from the Markov kernels when $\mu\neq\pi_{\bar{\gamma}}$,
and $\eta_{n,n}(f)-\pi(f)$ goes to zero at some rate as $n\rightarrow\infty$.
Perhaps more adventurously, one might further conjecture that this
stability property is inherited by the corresponding particle system
so that $\mathbb{\bar{E}}_{\mu}\left[\left|\eta_{n,n}^{N}(f)-\pi(f)\right|^{p}\right]^{1/p}$
is controlled uniformly in $n$, and diminishes at the usual $\sqrt{N}$
rate, where $\bar{\mathbb{E}}_{\mu}$ denotes expectation w.r.t. the
law of the particle process initialised using $\mu$. 

The results presented in this paper allow it to be shown that this
is indeed the case under assumptions which are realistic in the context
of applications such as \citep{smc:meth:DDJ06} on non-compact spaces.
The contributions are to establish deterministic stability results
for a broad class of FK models, of which the example above is one
instance, and to provide $\mathbb{L}_{p}$ bounds for the corresponding
particle errors.

\subsection{Summary of results}

The present work is built upon generic assumptions about the FK model
structure, which can be loosely summarized as follows: 
\begin{itemize}
\item the Markov kernels $\left\{ M_{n,k}\right\} $ are geometrically ergodic,
with common Foster-Lyapunov drift function $V$ and associated constants,
and with a common minorization condition
\item the potential functions $\left\{ G_{n,k}\right\} $ are uniformly
bounded above, and are of the form $G_{n,k}(x)\propto\exp\left[-\frac{1}{n}U_{n,k}(x)\right]$,
for $U_{n,k}$ positive and bounded uniformly over $n$ and $k$ in
$V$-norm .
\item $f$ is bounded in $V$-norm
\end{itemize}
Stability properties of general FK semigroups are established in Theorem
\ref{thm:v_norm_bound} (section \ref{sec:Stability-of-model}). Note
here that, in contrast to the above example, no specific form is assumed
for $\left\{ U_{n,k}\right\} $ or their relations to the invariant
distributions of $\left\{ M_{n,k}\right\} $. Theorem \ref{thm:L_p_bound}
(section \ref{sec:-Error-Bounds}) provides $\mathbb{L}_{p}$ error
bounds for the corresponding particle systems, under additional assumptions
on the drift properties of the particle process.

For the reader's convenience Theorem \ref{thm:app} is now summarized
(for the precise statement see section \ref{sec:Application}), which
is an application to the example sketched above. Let $\mathsf{X}=\mathbb{R}^{d}$.
Then when $\pi$ has a sub-exponential density w.r.t. Lebesgue measure
with asymptotically regular contours, and when each $M_{n,k}$ is
a random walk Metropolis kernel of invariant distribution $\pi_{\gamma(k/n)}$,
under a suitable trade-off between $p\geq1$, $\alpha\in[0,1)$ and
$\underline{\gamma}$, there exists $\rho<1$ and constants $C_{1}$
and $C_{2}$ such that for any $f$ with $\left\Vert f\right\Vert _{V^{\alpha}}:=\sup_{x}\dfrac{\left|f(x)\right|}{V^{\alpha}(x)}<+\infty,$
any $N\geq1$ and $n\geq1$,\begin{eqnarray}
\bar{\mathbb{E}}_{\mu}\left[\left|\pi_{n}^{N}(f)-\pi(f)\right|^{p}\right]^{1/p} & \leq & \left\Vert f\right\Vert _{V^{\alpha}}\left(\frac{C_{1}}{\sqrt{N}}\left(\frac{1-\rho^{n}}{1-\rho}\right)+\rho^{n}C_{2}\mathbb{I}\left[\mu\neq\pi_{\underbar{\ensuremath{\gamma}}}\right]\right),\label{eq:app_result_front}\end{eqnarray}
where $\pi_{n}^{N}:=\eta_{n,n}^{N}$, is the particle approximation
as in section \ref{sec:Introduction}. The first term on the r.h.s.
of (\ref{eq:app_result_front}) corresponds to the stochastic errors
from the population of size $N$ interacting and mutating over $n$
times steps. The second term on the r.h.s. is a deterministic bias
which arises if the initial distribution is mis-specified, and this
bias is decays exponentially quickly in $n$.

\subsection{Existing work}

Despite the fact that the focus here is on models with the {}``flattening''
property, existing SMC stability results cannot be transferred directly
to the present scenario under realistic assumptions. They are reviewed
below for completeness.

\citet[Theorem 7.4.4]{smc:theory:Dm04} proved time-uniform $\mathbb{L}_{p}$
error bounds, under assumptions which in the present scenario would
take the form:
\begin{itemize}
\item there exist $\epsilon_{M}>0$, $m\geq0$ and $\epsilon_{G}>0$ such
for all $n$, $k$ and $x,y\in\mathsf{X}$\begin{eqnarray}
M_{n,k}\ldots M_{n,k+m}(x,\cdot) & \geq & \epsilon_{M}M_{n,k}\ldots M_{n,k+m}(y,\cdot)\label{eq:mixing}\\
G_{n,k}(x) & \geq & \epsilon_{G}G_{n,k}(y),\label{eq:G_bounded_above_and_below}\end{eqnarray}

\item $f$ is bounded.
\end{itemize}
Note that these results hold for fully general FK models, not necessarily
having the {}``flattening'' property (see also \citep{smc:the:DMM00,smc:the:DMG01,smc:the:DMD04,smc:the:LGO04,smc:the:K05,smc:the:CRDM11,smc:the:DMJDJ11}
for various results under the same type of assumption as one or both
of (\ref{eq:mixing})-(\ref{eq:G_bounded_above_and_below})). However,
these assumptions are very strong. Equation (\ref{eq:mixing}) is
stronger than uniform ergodicity of the m-step kernels, and typically
is not satisfied for the kernels of interest in \citep{smc:meth:DDJ06},
such as Metropolis-Hastings kernels on $\mathbb{R}^{d}$. For a toy
example which highlights the issue, consider the case that $\mathsf{X}=\mathbb{R}$
and for some probability measure $\nu$ on $\mathsf{X}$ dominated
by Lebesgue measure, take $M_{n,k}(x,\cdot)=a\delta_{x}(\cdot)+(1-a)\nu(\cdot)$.
It is easy to check that (\ref{eq:mixing}) is violated. Similarly
(\ref{eq:G_bounded_above_and_below}) is typically not satisfied in
the applications of interest on non-compact spaces and the assumption
that $f$ is bounded is then also rather restrictive. 

\citet{smc:the:OR05} and \citet{smc:the:HC08} used truncation approaches
to obtain stability results for particle filters in expectation over
the observation process, without mixing assumptions, but they respectively
introduce a rejection step into the particle algorithm (making its
computational cost random) and use restrictive assumptions about the
state-spaces involved, the hidden Markov model (HMM) and the particle
mutation kernels which are not realistic in the scenario of interest.
\citet{smc:the:vH09} proved uniform time-average consistency of particle
filters under tightness assumptions for a class of bounded functions.
Very recently, \citet{smc:the:DMJDJ11} have studied SMC methods in
which the resampling step is applied adaptively over time.

The stability of SMC methods has also been studied in the asymptotic
regime $N\rightarrow\infty$. \citet{smc:the:C04} established a CLT
for a broad class of SMC algorithms, and showed that under the same
type of strong mixing assumptions as in (\ref{eq:mixing})-(\ref{eq:G_bounded_above_and_below}),
the asymptotic (in $N$) variance associated with the rescaled stochastic
error can be bounded uniformly in $n$. \citet{smc:the:JD08} focused
on the asymptotic variance corresponding to the algorithms proposed
by \citet{smc:meth:DDJ06} for unbounded functions. They obtained
a bound on the asymptotic variance under realistic geometric ergodicity
assumptions which are the same as those considered in the present
work. They do not consider the {}``flattening'' regime in their
assumptions, and their bounds on the asymptotic variance are not time
uniform. Further details of the relationship between the approach
of \citet{smc:the:JD08} and some of the ideas in this paper are postponed
until section \ref{sec:Stability-of-model}. 

It is relevant also to mention the recent results of \citet{filter:the:KV08,filter:the:DFMP09}
on forgetting of initial conditions in HMM's (i.e. without particle
approximation) assuming a local Doeblin condition on the Markov kernels:
this also does not hold in the Metropolis-on-$\mathbb{R}^{d}$ scenario
of interest and so these results cannot be transferred directly (the
results of \citet{filter:the:DGLM10} do not require ergodicity of
the kernel, but their assumptions do involve the same minorization/majorization
structure). \citet{filter:the:DMR09} proposed a generic approach
to optimal filter stability without particle approximation using ideas
of coupling inhomogeneous Markov chains and a similar approach is
exploited here, further comment is delayed until the later sections.

The remainder of the paper is structured in the following manner.
Section \ref{sec:Definitions-and-Assumptions} specifies the general
form of the FK models in question, associated semigroups and particle
systems. Section \ref{sec:Stability-of-model} deals with the deterministic
stability of the sequences of measures arising from the FK models.
$\mathbb{L}_{p}$ error bounds for the stochastic errors of the particle
approximations are derived in Section \ref{sec:-Error-Bounds}. Section
\ref{sec:Application} applies the results to the case where $\mathsf{X}=\mathbb{R}^{d}$
and when the Markov kernels are of the random walk Metropolis variety.
The appendix contains a discussion of drift properties of the particle
system.

\section{Definitions \label{sec:Definitions-and-Assumptions}}

Consider a \textit{state space} $\mathsf{X}$ and an associated countably
generated $\sigma$-algebra $\mathcal{B}(\mathsf{X})$, Let $\mathcal{P}(\mathsf{X})$
be the collection of probability measures on $\left(\mathsf{X},\mathcal{B}(\mathsf{X})\right)$.
For a measure $\mu$ on $\left(\mathsf{X},\mathcal{B}(\mathsf{X})\right)$,
an integral kernel $M:\mathsf{X}\times\mathcal{B}(\mathsf{X})\rightarrow[0,\infty)$
and a function $f:\mathsf{X}\rightarrow\mathbb{R}$, define $\mu f:=\int_{\mathsf{X}}f(x)\mu(dx)$,
$Mf(x):=\int_{\mathsf{X}}M(x,dy)f(y)$ and $\mu M(\cdot)=\int_{\mathsf{X}}\mu(dx)P(x,\cdot)$
. The function which assigns $1$ to every point in $\mathsf{X}$
is also denoted by $1$, and the indicator function on a set $A$
is denoted by $\mathbb{I}_{A}$.

Let $W:\mathsf{X}\rightarrow[1,\infty)$ and $f:\mathsf{X}\rightarrow\mathbb{R}$
be two measurable functions, then define the norm\[
\parallel f\parallel_{W}=\underset{x\in\mathsf{X}}{\sup}\frac{|f(x)|}{W(x)}.\]
 and let $\mathcal{L}_{W}=\{f:\parallel f\parallel_{W}<\infty\}$.
Let $\mu$ be a signed measure on $\left(\mathsf{X},\mathcal{B}(\mathsf{X})\right)$,
then define the norm\begin{eqnarray*}
\parallel\mu\parallel_{W} & = & \sup_{\left|\phi\right|\leq W}\left|\mu(\phi)\right|.\end{eqnarray*}

\subsection{Feynman-Kac models and associated semigroups}

Let $\mu\in\mathcal{P}(\mathsf{X})$ and for each $n\in\mathbb{N}$
let $\left\{ M_{n,k};1\leq k\leq n\right\} $ be a collection of Markov
kernels, each kernel acting $\mathsf{X}\times\mathcal{B}(\mathsf{X})\rightarrow[0,1]$.
Let $\left\{ G_{n,k};0\leq k\leq n-1\right\} $ be a collection of
$\mathcal{B}(\mathsf{X})$-measurable, real-valued, strictly positive
and bounded functions on $\mathsf{X}$. The notation employed below
is directly inspired by that of \citet{smc:theory:Dm04}, with some
important modifications, primarily to the indexing, reflecting the
scenario of interest in which there is a different FK model for each
$n$. 

Next, for each $n\in\mathbb{N}$, let $\left\{ Q_{n,k};1\leq k\leq n\right\} $
be the collection of integral kernels defined by\[
Q_{n,k}(x,dy)=G_{n,k-1}(x)M_{n,k}(x,dy).\]
For each $n$ and $0\leq k\leq\ell\leq n$, let $M_{n,k:\ell}$ and
$Q_{n,k:\ell}$ the semigroups associated with the Markov kernels
$\left\{ M_{n,k}\right\} $ and the kernels $\left\{ Q_{n,k}\right\} $.
These semigroups are defined by\begin{eqnarray}
M_{n,k:\ell} & = & M_{n,k+1}M_{n,k+2}\ldots M_{n,\ell},\quad k<\ell\leq n,\nonumber \\
Q_{n,k:\ell} & = & Q_{n,k+1}Q_{n,k+2}\ldots Q_{n,\ell},\quad k<\ell\leq n,\label{eq:Q_semigroup}\end{eqnarray}
and $M_{n,k:k}=Q_{n,k:k}=Id,\quad0\leq k\leq n$. Next define the
collection of probability measures $\left\{ \eta_{n,k};0\leq k\leq n\right\} $
by\[
\eta_{n,k}(A)=\frac{\mu Q_{n,0:k}(A)}{\mu Q_{n,0:k}(1)},\;\;\; A\in\mathcal{B}(\mathsf{X}).\]
Let $\left\{ \Psi_{n,k};0\leq k<n\right\} $ and $\left\{ \Phi_{n,k};1\leq k\leq n\right\} $
be the collections of mappings, each mapping acting from $\mathcal{P}\left(\mathsf{X}\right)$
into $\mathcal{P}\left(\mathsf{X}\right)$, defined for any $\eta\in\mathcal{P}\left(\mathsf{X}\right)$
by\begin{eqnarray*}
\Psi_{n,k}(\eta)(dx)=\frac{G_{n,k}(x)}{\eta(G_{n,k})}\eta(dx), & \quad & \Phi_{n,k}(\eta)=\Psi_{n,k-1}(\eta)M_{n,k},\end{eqnarray*}
and for $0\leq k\leq\ell\leq n$ denote by $\Phi_{n,k:\ell}$ the
semigroup associated with the mappings $\left\{ \Phi_{n,k}\right\} $,
defined by\begin{eqnarray*}
\Phi_{n,k:\ell} & = & \Phi_{n,\ell}\circ\Phi_{n,\ell-1}\circ\ldots\circ\Phi_{n,k+1},\quad k<\ell\leq n,\end{eqnarray*}
and with the convention $\Phi_{n,k:k}=Id$. It is straightforward
to check that under these definitions, for any $0\leq k\leq\ell\leq n$,
$\eta\in\mathcal{P}\left(\mathsf{X}\right)$ and $A\in\mathcal{B}(\mathsf{X})$,\begin{eqnarray}
\Phi_{n,k:\ell}(\eta)(A) & = & \frac{\eta Q_{n,k:\ell}(A)}{\eta Q_{n,k:\ell}(1)}\label{eq:Phi_in_terms_ofQ}\end{eqnarray}
and in particular, \[
\Phi_{n,k:\ell}(\eta_{n,k})=\eta_{n,\ell}.\]
Lastly, let $\left\{ S_{n,k};1\leq k\leq n\right\} $ be the collection
of Markov kernels, each kernel acting $\mathsf{X}\times\mathcal{B}(\mathsf{X})\rightarrow[0,1]$,
defined by\begin{eqnarray*}
S_{n,k}(A) & = & \frac{M_{n,k}(Q_{n,k:n}(1)\mathbb{I}_{A})}{M_{n,k}(Q_{n,k:n}(1))}.\end{eqnarray*}

Under these definitions it is straightforward to check that, in line
with (\ref{eq:Phi_in_terms_ofQ}), we have the alternative description
of the mapping $\Phi_{n,k:n}$ in terms of the Markov kernels $\left\{ S_{n,k}\right\} $:
for any $0\leq k\leq n$ and any $\eta\in\mathcal{P}(\mathsf{X})$\begin{eqnarray}
\Phi_{n,k:n}(\eta)(A) & = & \frac{\eta\left(Q_{n,k:n}(1)S_{n,k+1}\ldots S_{n,n}(A)\right)}{\eta\left(Q_{n,k:n}(1)\right)}.\label{eq:transport_in_terms_of_S}\end{eqnarray}

Consider the following assumption.

\begin{condition}\label{hyp:G_bounded-1}There exists a finite constant
$\overline{G}_{\mathsf{X}}$ such that for all $n\in\mathbb{N}$ and
$0\leq k\leq n-1$,\begin{eqnarray*}
G_{n,k}(x) & \leq & \overline{G}_{\mathsf{X}},\quad\quad\forall x\in\mathsf{X}.\end{eqnarray*}

\end{condition}

When assumption (A\ref{hyp:G_bounded-1}) holds, for $0\leq k\leq n-1$
define $\widetilde{G}_{n,k}:\mathsf{X}\rightarrow(0,1]$ by

\[
\widetilde{G}_{n,k}(x)=\frac{G_{n,k}(x)}{\overline{G}_{\mathsf{X}}},\]
 and correspondingly,\begin{eqnarray*}
\widetilde{Q}_{n,k}(x,dy) & = & \widetilde{G}_{n,k-1}(x)M_{n,k}(x,dy).\end{eqnarray*}
Also let $\widetilde{Q}_{n,k:\ell}$ denote the semigroup associated
with the kernels $\widetilde{Q}_{n,k}$ in the same manner as (\ref{eq:Q_semigroup}),
with the same convention $\widetilde{Q}_{n,k:k}=Id$. Furthermore
under (A\ref{hyp:G_bounded-1}), let $U_{n,k}:\mathsf{X}\rightarrow[0,\infty)$
be defined by $U_{n,k}(x)=-n\log\widetilde{G}_{n,k}(x)$.

As stated in section \ref{sec:Introduction}, the work presented here
is primarily motivated by the models and algorithms considered in
\citep{smc:meth:DDJ06}. However, the {}``backward'' kernel structure
which they consider is not introduced here as it is not essential
for our purposes. A specific example is given in section \ref{sec:Application}
and at that point comment on how this fits with the framework of \citet{smc:meth:DDJ06}
is provided.

\subsection{Particle systems}

An explicit construction of the probability space for the particle
systems is not provided here, but this can be carried out by canonical
methods, see for example \citep[ Chapter 3]{smc:theory:Dm04} and
should be clear from the following symbolic description. Fix $N\in\mathbb{N}$,
$n\in\mathbb{N}$, and for $0\leq k\leq n$ let $\zeta_{n,k}^{(N)}:=\left\{ \mbox{\ensuremath{\xi}}_{n,k}^{\left(N,i\right)};1\leq i\leq N\right\} $,
where each $\mbox{\ensuremath{\xi}}_{n,k}^{\left(N,i\right)}$ is
valued in $\mathsf{X}$. Denote $\eta_{n,k}^{N}:=\dfrac{1}{N}\sum_{i=1}^{N}\delta_{\xi_{n,k}^{\left(N,i\right)}}$.
For $1\leq i\leq N$ and $1\leq k\leq n$ let $\mathcal{F}_{n,k}^{(N,i)}:=\sigma(\zeta_{n,0}^{(N)},\ldots,\zeta_{n,k-1}^{(N)},\mbox{\ensuremath{\xi}}_{n,k}^{\left(N,1\right)},...,\mbox{\ensuremath{\xi}}_{n,k}^{\left(N,i\right)})$
and $\mathcal{F}_{n,0}^{(N,i)}:=\sigma(\mbox{\ensuremath{\xi}}_{n,0}^{\left(N,1\right)},...,\mbox{\ensuremath{\xi}}_{n,0}^{\left(N,i\right)})$.
The generations of the particle system $\left\{ \zeta_{n,k}^{(N)};0\leq k\leq n\right\} $
form a non-homogeneous Markov chain: for $\mu\in\mathcal{P}(\mathsf{X})$,
the law of this chain is denoted by $\mathbb{P}_{\mu}$ and has transitions
given in integral form by:

\begin{eqnarray*}
\mathbb{P}_{\mu}\left(\zeta_{n,0}^{(N)}\in dx\right) & = & \prod_{i=1}^{N}\mu(dx^{i}),\\
\\\mathbb{P}_{\mu}\left(\left.\zeta_{n,k}^{(N)}\in dx\right|\zeta_{n,k-1}^{(N)}\right) & = & \prod_{i=1}^{N}\Phi_{n,k}\left(\eta_{n,k-1}^{N}\right)(dx^{i}),\quad1\leq k\leq n,\end{eqnarray*}
where $dx=d\left(x^{1},\ldots x^{N}\right)$. Denote by $\bar{\mathbb{E}}_{\mu}$
the expectation corresponding to $\mathbb{P}_{\mu}$. It is easy to
check that for any $0\leq k\leq n$ and$1\leq i\leq N$,\begin{eqnarray}
\bar{\mathbb{E}}_{\mu}\left[\mathbb{I}_{A}\left(\mbox{\ensuremath{\xi}}_{n,k}^{\left(N,i\right)}\right)\right] & = & \bar{\mathbb{E}}_{\mu}\left[\eta_{n,k}^{N}(A)\right],\quad A\in\mathcal{B}\left(\mathsf{X}\right).\label{eq:exch}\end{eqnarray}

\section{Stability of the deterministic measures\label{sec:Stability-of-model}}

This section is concerned with stability properties of the sequences
of Markov kernels $\left\{ S_{n,k}\right\} $ and operators $\left\{ \Phi_{n,k}\right\} $.
The approach is to identify non-homogeneous Foster-Lyapunov drift
functions and minorization conditions which arise quite naturally
from the structure of the FK model and then to employ the quantitative
bounds of \citet{mc:theory:DMR:04}. Compared to \citep{smc:the:JD08},
in the present work the general structure of FK models is exploited
more directly and in a way which is fruitful when the models satisfy
assumption (A\ref{hyp:norm_const_bound_below}) below. \citet[Section 5]{filter:the:DMR09}
identified drift functions and coupling sets for related operators
in some specific HMM's; the present work is concerned with a general
FK model structure. The first main idea of this section is illustrated
by the following assumption and lemma.

\begin{condition}\label{hyp:fosterlyapunov_prelim}

There exists $\lambda\in[0,1)$, a function $V:\mathsf{X}\rightarrow[1,\infty)$,
$\varepsilon\in(0,1]$, $b\in(0,\infty)$, $\mathsf{C}\in\mathcal{B}\left(\mathsf{X}\right)$
and a probability measure $\nu\in\mathcal{P}(\mathsf{X})$, such that
\begin{eqnarray}
\inf_{n\geq1}\inf_{1\leq k\leq n}M_{n,k}(x,A) & \geq & \varepsilon\cdot\nu(A),\;\;\;\forall A\in\mathcal{B}(\mathsf{X}),\;\forall x\in\mathsf{C},\label{eq:foster_lyap1-1}\\
\sup_{n\geq1}\sup_{1\leq k\leq n}M_{n,k}V(x) & \leq & \lambda V(x)+b\mathbb{I}_{\mathsf{C}}(x),\;\;\;\forall x\in\mathsf{X}.\label{eq:foster_lyap2-1}\end{eqnarray}

\end{condition}
\begin{lem}
\label{lem:non_hom_drift_prelim} Assume (A\ref{hyp:G_bounded-1})-(A\ref{hyp:fosterlyapunov_prelim}).
Then for each $n\in\mathbb{N}$ and $1\leq k\leq n$\textup{,}\begin{eqnarray}
S_{n,k}(x,\cdot) & \geq & \epsilon_{n,k}\nu_{n,k}(\cdot),\quad\forall x\in\mathsf{C},\label{eq:minor-1}\\
\nonumber \\S_{n,k}V_{n,k}(x) & \leq & \lambda V_{n,k-1}(x)+b_{n,k-1}\mathbb{I}_{\mathsf{C}}(x),\quad\forall x\in\mathsf{X},\label{eq:drift-1-1}\end{eqnarray}
where\begin{eqnarray}
\epsilon_{n,k}:=\varepsilon\cdot\nu\left(\widetilde{Q}_{n,k:n}(1)\right), & \quad & b_{n,k}:=\frac{b}{\varepsilon\cdot\nu\left(\widetilde{Q}_{n,k:n}(1)\right)},\label{eq:eps_n_p_prelim}\end{eqnarray}
and $\nu_{n,k}\in\mathcal{P}\left(\mathsf{X}\right)$ and $V_{n,k}:\mathsf{X}\rightarrow[1,\infty)$
are defined by\begin{eqnarray}
\nu_{n,k}(A) & := & \frac{\nu\left(\widetilde{Q}_{n,k:n}(1)\mathbb{I}_{A}\right)}{\nu\left(\widetilde{Q}_{n,k:n}(1)\right)},\quad A\in\mathcal{B}\mathsf{\left(X\right),}\label{eq:nu_n_p_prelim}\\
\nonumber \\V_{n,k}(x) & := & \frac{V(x)}{M_{n,k+1}\left(\widetilde{Q}_{n,k+1:n}(1)\right)(x)},\quad x\in\mathsf{X,}\quad0\leq k<n,\label{eq:V_n_p_prelim}\end{eqnarray}
$V_{n,n}:=V$, with $\lambda$, $V$, $b$, $\nu$, $\varepsilon$
and $\mathsf{C}$ as in (A\ref{hyp:fosterlyapunov_prelim}).\end{lem}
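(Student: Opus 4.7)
The plan is a direct calculation from the definitions. Since $G_{n,j} = \overline{G}_{\mathsf{X}}\widetilde{G}_{n,j}$, the constant $\overline{G}_{\mathsf{X}}^{n-k}$ factors out of both the numerator and denominator in the definition of $S_{n,k}$, so
\[
S_{n,k}(x,A) = \frac{M_{n,k}\bigl(\widetilde{Q}_{n,k:n}(1)\,\mathbb{I}_A\bigr)(x)}{M_{n,k}\bigl(\widetilde{Q}_{n,k:n}(1)\bigr)(x)}.
\]
Two identities drive the whole argument. First, the one-step semigroup recursion $\widetilde{Q}_{n,k:n}(1)(y) = \widetilde{G}_{n,k}(y)\,M_{n,k+1}\bigl(\widetilde{Q}_{n,k+1:n}(1)\bigr)(y)$ (for $k<n$), together with the definition of $V_{n,k}$, yields the telescoping identity $\widetilde{Q}_{n,k:n}(1)\,V_{n,k} = \widetilde{G}_{n,k}\,V$ pointwise. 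Second, the definition of $V_{n,k-1}$ is precisely $M_{n,k}(\widetilde{Q}_{n,k:n}(1))(x) = V(x)/V_{n,k-1}(x)$, identifying the denominator of $S_{n,k}V_{n,k}(x)$.

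For the minorization, I apply the minorization in (A\ref{hyp:fosterlyapunov_prelim}) to the kernel $M_{n,k}$ at $x\in\mathsf{C}$: the numerator is at least $\varepsilon\,\nu(\widetilde{Q}_{n,k:n}(1)\mathbb{I}_A) = \varepsilon\,\nu(\widetilde{Q}_{n,k:n}(1))\,\nu_{n,k}(A)$, and the denominator is at most $\|\widetilde{Q}_{n,k:n}(1)\|_{\infty}\leq 1$ (which follows inductively, since each $\widetilde{G}_{n,j}\leq 1$ and each $M_{n,j}$ is a Markov kernel). The quotient gives $\epsilon_{n,k}\nu_{n,k}(A)$ as required.

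For the drift, the telescoping identity together with $\widetilde{G}_{n,k}\leq 1$ and the Foster--Lyapunov bound (\ref{eq:foster_lyap2-1}) give
\[
M_{n,k}\bigl(\widetilde{Q}_{n,k:n}(1)V_{n,k}\bigr)(x) = M_{n,k}(\widetilde{G}_{n,k}V)(x) \leq \lambda V(x) + b\,\mathbb{I}_{\mathsf{C}}(x),
\]
and dividing by $V(x)/V_{n,k-1}(x)$ yields $S_{n,k}V_{n,k}(x) \leq \lambda V_{n,k-1}(x) + b\,\mathbb{I}_{\mathsf{C}}(x)\,V_{n,k-1}(x)/V(x)$. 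On $\mathsf{C}$, a second invocation of the minorization bounds $V_{n,k-1}(x)/V(x) = 1/M_{n,k}(\widetilde{Q}_{n,k:n}(1))(x)$ from above by $1/(\varepsilon\,\nu(\widetilde{Q}_{n,k:n}(1)))$; combining this with the semigroup recursion for $\widetilde{Q}_{n,k-1:n}(1)$ identifies the final constant as $b_{n,k-1}$. The edge case $k=n$ follows directly, since $\widetilde{Q}_{n,n:n}(1)=1$ forces $S_{n,n}=M_{n,n}$ and $V_{n,n-1}=V_{n,n}=V$, so the drift reduces to (\ref{eq:foster_lyap2-1}) itself with $b_{n,n-1}\geq b$.

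I expect the only real obstacle to be the index bookkeeping in the last step: the minorization on $\mathsf{C}$ naturally produces an intermediate constant of the form $b/(\varepsilon\,\nu(\widetilde{Q}_{n,k:n}(1)))$, and converting this into the stated $b_{n,k-1} = b/(\varepsilon\,\nu(\widetilde{Q}_{n,k-1:n}(1)))$ requires careful use of the semigroup identity together with $\widetilde{G}\leq 1$. Everything else is routine manipulation of the defining integrals.
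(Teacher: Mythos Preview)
Your argument is correct and follows the same route as the paper's proof: rewrite $S_{n,k}$ in terms of $\widetilde{Q}_{n,k:n}$, use $0<\widetilde{Q}_{n,k:n}(1)\leq 1$ to bound the denominator by $1$ for the minorization, use the telescoping identity $\widetilde{Q}_{n,k:n}(1)\,V_{n,k}=\widetilde{G}_{n,k}V$ together with (A\ref{hyp:fosterlyapunov_prelim}) for the drift, and on $\mathsf{C}$ bound $1/M_{n,k}(\widetilde{Q}_{n,k:n}(1))$ via the minorization.

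The index issue you single out at the end is real, but you should not try to ``convert'' the constant: the paper's own proof stops at
\[
S_{n,k}V_{n,k}(x)\ \leq\ \lambda V_{n,k-1}(x)+\frac{b}{\varepsilon\,\nu(\widetilde{Q}_{n,k:n}(1))}\,\mathbb{I}_{\mathsf{C}}(x),
\]
which, by the stated definition $b_{n,k}=b/(\varepsilon\,\nu(\widetilde{Q}_{n,k:n}(1)))$, is $b_{n,k}$ rather than the $b_{n,k-1}$ appearing in the lemma statement. The passage from $\nu(\widetilde{Q}_{n,k:n}(1))$ to $\nu(\widetilde{Q}_{n,k-1:n}(1))$ that you propose via $\widetilde{G}\leq 1$ does not in general yield the required inequality (the extra $M_{n,k}$ in $\widetilde{Q}_{n,k-1:n}(1)=\widetilde{G}_{n,k-1}M_{n,k}\widetilde{Q}_{n,k:n}(1)$ prevents a pointwise comparison with $\widetilde{Q}_{n,k:n}(1)$). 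This is a harmless index slip in the statement; the constant actually established --- and the one used later in Proposition~\ref{pro:bi_var_drift}, see (\ref{eq:uni_var_nh_drift}) --- is $b/(\varepsilon\,\nu(\widetilde{Q}_{n,k:n}(1)))$, exactly what your computation produces.
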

\begin{proof}
Noting that $0<\widetilde{Q}_{n,k:n}(1)(x)\leq1$ for all $x\in\mathsf{X}$,
we have\begin{eqnarray*}
S_{n,k}(x,A) & =\dfrac{M_{n,k}\left(\widetilde{Q}_{n,k:n}(1)\mathbb{I}_{A}\right)(x)}{M_{n,k}\left(\widetilde{Q}_{n,k:n}(1)\right)(x)} & \geq M_{n,k}\left(\widetilde{Q}_{n,k:n}(1)\mathbb{I}_{A}\right)(x)\\
 &  & \geq\varepsilon\cdot\nu\left(\widetilde{Q}_{n,k:n}(1)\right)\frac{\nu\left(\widetilde{Q}_{n,k:n}(1)\mathbb{I}_{A}\right)}{\nu\left(\widetilde{Q}_{n,k:n}(1)\right)},\quad x\in\mathsf{C}.\end{eqnarray*}
and\begin{eqnarray}
S_{n,k}V_{n,k}(x) & =\dfrac{M_{n,k}\left(\widetilde{G}_{n,k}V\right)(x)}{M_{n,k}\left(\widetilde{Q}_{n,k:n}(1)\right)(x)} & \leq\lambda\frac{V(x)}{M_{n,k}\left(\widetilde{Q}_{n,k:n}(1)\right)(x)}+b\frac{\mathbb{I}_{\mathsf{C}}(x)}{M_{n,k}\left(\widetilde{Q}_{n,k:n}(1)\right)(x)}\nonumber \\
 &  & \leq\lambda V_{n,k-1}(x)+b\frac{\mathbb{I}_{\mathsf{C}}(x)}{\varepsilon\cdot\nu\left(\widetilde{Q}_{n,k:n}(1)\right)},\quad\forall x\in\mathsf{X},\label{eq:uni_var_nh_drift-1}\end{eqnarray}
recalling the convention $\widetilde{Q}_{n,n:n}(1)=1$. 
\end{proof}
The minorization and drift conditions (\ref{eq:minor-1})-(\ref{eq:drift-1-1})
pave the way to establishing the stability properties of the operators
$\left\{ \Phi_{n,k}\right\} $. In order to move further we introduce
assumption (A\ref{hyp:fosterlyapunov}) below, which is a stricter
version of (A\ref{hyp:fosterlyapunov_prelim}). The extra structure
of (A\ref{hyp:fosterlyapunov}) allows the construction in Proposition
\ref{pro:bi_var_drift} of bi-variate drift and minorization conditions.
Consideration of the case in which the small set arises as a sub-level
set of the drift function $V$ is a standard and generic approach
to constructing bi-variate drift conditions from their uni-variate
counter-parts, see for example \citep{mcmc:the:ABD01,mc:theory:DMR:04}.
Furthermore, assumption (A\ref{hyp:fosterlyapunov}) allows the level
in question to be chosen in a very flexible way, and this property
is exploited in Proposition \ref{pro:bi_var_drift} when dealing with
the specific structure arising from the kernels $\left\{ S_{n,k}\right\} $.
Verification of (A\ref{hyp:fosterlyapunov}) in a particular application
is provided in section \ref{sec:Application}. Assumption (A\ref{hyp:norm_const_bound_below})
allows the bounding of non-homogeneous minorization and drift constants.
A generic approach to verifying (A\ref{hyp:norm_const_bound_below})
is presented at the end of this section, where the connection with
flattening property mentioned in the introduction is made more explicit.

\begin{condition}\label{hyp:fosterlyapunov}There exists $d_{0}\geq1$,
$\lambda\in[0,1)$, a function $V:\mathsf{X}\rightarrow[1,\infty)$,
and for all $d\in[d_{0},+\infty)$, there exists $\varepsilon_{d}\in(0,1]$,
$b_{d}\in(0,\infty)$ and $\nu_{d}\in\mathcal{P}(\mathsf{X})$ such
that $\nu_{d}\left(\mathsf{C}_{d}\right)>0$, $\nu_{d}(V)<+\infty$,
\begin{eqnarray}
\inf_{n\geq1}\inf_{1\leq k\leq n}M_{n,k}(x,A) & \geq & \varepsilon_{d}\cdot\nu_{d}(A),\;\;\;\forall A\in\mathcal{B}(\mathsf{X}),\;\forall x\in\mathsf{C}_{d},\label{eq:foster_lyap1}\\
\sup_{n\geq1}\sup_{1\leq k\leq n}M_{n,k}V(x) & \leq & \lambda V(x)+b_{d}\mathbb{I}_{\mathsf{C}_{d}}(x),\;\;\;\forall x\in\mathsf{X},\label{eq:foster_lyap2}\end{eqnarray}
where $\mathsf{C}_{d}:=\left\{ x:V(x)\leq d\right\} $. 

\end{condition}

\begin{condition}\label{hyp:norm_const_bound_below} Whenever (A\ref{hyp:G_bounded-1})
and (A\ref{hyp:fosterlyapunov}) hold, for any $\mu\in\mathcal{P}\left(\mathsf{X}\right)$
with $\mu(V)<+\infty$, and $d\in[d_{0},+\infty)$ there exists a
positive and finite constant $K\left(\mu,\lambda,V,b_{d}\right)$
such that\begin{eqnarray*}
\inf_{n\geq1}\inf_{0\leq k\leq n}\mu\left(\widetilde{Q}_{n,k:n}(1)\right) & \geq & K\left(\mu,\lambda,V,b_{d}\right),\end{eqnarray*}
where $\lambda$, $V$ and $d_{0}$ are as in (A\ref{hyp:fosterlyapunov}).\end{condition}

The main result of this section is now presented.
\begin{thm}
\label{thm:v_norm_bound}Assume (A\ref{hyp:G_bounded-1}), (A\ref{hyp:fosterlyapunov})
and (A\ref{hyp:norm_const_bound_below}). Then for $\alpha\in(0,1]$,
there exists $\rho\in(\lambda,1)$ and a finite constant $M$ such
that for any $\mu,\mu'\in\mathcal{P}(\mathsf{X})$, any $n\in\mathbb{N}$
and any $1\leq k\leq n$,\begin{eqnarray}
\left\Vert \mu S_{n,k}\ldots S_{n,n}-\mu'S_{n,k}\ldots S_{n,n}\right\Vert _{V^{\alpha}} & \leq & M\rho^{n-k+1}\left[\mu\left(V_{n,k-1}^{\alpha}\right)+\mu'\left(V_{n,k-1}^{\alpha}\right)\right],\label{eq:theorem_v_norm_bounds_statement}\end{eqnarray}
 where $V_{n,k-1}$ is as given in equation (\ref{eq:V_n_p_prelim}),
and consequently, for $0\leq k\leq n,$\begin{eqnarray}
\left\Vert \Phi_{n,k:n}\left(\mu\right)-\Phi_{n,k:n}\left(\mu'\right)\right\Vert _{V^{\alpha}} & \leq & M\rho^{n-k}\left[\frac{\mu\left(\widetilde{G}_{n,k}V^{\alpha}\right)}{\mu\left(\widetilde{Q}_{n,k:n}(1)\right)}+\frac{\mu'\left(\widetilde{G}_{n,k}V^{\alpha}\right)}{\mu'\left(\widetilde{Q}_{n,k:n}(1)\right)}\right].\label{eq:theorem_v_norma_bounds_Phi_state}\end{eqnarray}

\end{thm}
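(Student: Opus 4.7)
The plan is to realise $\mu S_{n,k}\cdots S_{n,n}$ and $\mu' S_{n,k}\cdots S_{n,n}$ as the time-$(n-k+1)$ marginals of a non-homogeneous Markov chain driven by the kernels $\{S_{n,j}\}_{j=k}^{n}$, and to apply the quantitative coupling bounds of \citet{mc:theory:DMR:04} to this chain. Lemma \ref{lem:non_hom_drift_prelim} already provides the ingredients: a minorization on $\mathsf{C}$ and a drift from $V_{n,k}$ to $\lambda V_{n,k-1}$ off $\mathsf{C}$. The strategy has three successive tasks: (i) turn the uni-variate drift/minorization into bi-variate ones on $\mathsf{X}\times\mathsf{X}$ with constants uniform in $n,k$; (ii) feed these into a Douc--Moulines--Rosenthal-type bound to obtain (\ref{eq:theorem_v_norm_bounds_statement}); (iii) transfer the result to $\Phi_{n,k:n}$ via the representation (\ref{eq:transport_in_terms_of_S}).

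For (i), the jump from (A\ref{hyp:fosterlyapunov_prelim}) to (A\ref{hyp:fosterlyapunov}) is what pays off, since it allows the small set $\mathsf{C}_{d}$ to be chosen as a sub-level set of $V$ with $d$ picked large enough (following the generic recipe of \citep{mcmc:the:ABD01,mc:theory:DMR:04}) to make the corresponding bi-variate Foster--Lyapunov coefficient strictly below one; this is the content of Proposition \ref{pro:bi_var_drift}. To keep the constants $\epsilon_{n,k}=\varepsilon_{d}\cdot\nu_{d}(\widetilde{Q}_{n,k:n}(1))$ and $b_{n,k}$ of Lemma \ref{lem:non_hom_drift_prelim} controlled uniformly in $n$ and $k$, I would apply (A\ref{hyp:norm_const_bound_below}) with $\mu=\nu_{d}$ (this is permitted since $\nu_{d}(V)<\infty$), obtaining $\inf_{n,k}\nu_{d}(\widetilde{Q}_{n,k:n}(1))\geq K(\nu_{d},\lambda,V,b_{d})>0$. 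This replaces the non-homogeneous minorization/drift constants with time-homogeneous ones and makes the hypotheses of the quantitative coupling lemma applicable.

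For (ii), inserting these bi-variate drift and minorization conditions with uniform constants into the coupling bound yields a $\rho'\in(\lambda,1)$ and a finite $M'$, independent of $n,k$, such that
\[
\bigl\|\mu S_{n,k}\cdots S_{n,n}-\mu' S_{n,k}\cdots S_{n,n}\bigr\|_{V}\leq M'\rho'^{\,n-k+1}\bigl[\mu(V_{n,k-1})+\mu'(V_{n,k-1})\bigr].
\]
To upgrade to a general $\alpha\in(0,1]$, I would use concavity of $v\mapsto v^{\alpha}$ and Jensen: $M_{n,j}V^{\alpha}\leq(M_{n,j}V)^{\alpha}\leq\lambda^{\alpha}V^{\alpha}+b_{d}^{\alpha}\mathbb{I}_{\mathsf{C}_{d}}$, so the same framework applies with $V^{\alpha}$ in place of $V$ and $\lambda^{\alpha}$ in place of $\lambda$; absorbing the resulting modifications into a possibly larger $\rho\in(\lambda,1)$ and $M$ gives (\ref{eq:theorem_v_norm_bounds_statement}). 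For (iii), the identity (\ref{eq:transport_in_terms_of_S}) says $\Phi_{n,k:n}(\mu)=\tilde\mu\, S_{n,k+1}\cdots S_{n,n}$ where $\tilde\mu(dx)=Q_{n,k:n}(1)(x)\mu(dx)/\mu(Q_{n,k:n}(1))$; applying (\ref{eq:theorem_v_norm_bounds_statement}) to $\tilde\mu,\tilde\mu'$ at index $k+1$, recognising that $V_{n,k}\cdot\widetilde{Q}_{n,k:n}(1)=V$ followed by a change of variables, and using $\overline{G}_{\mathsf{X}}^{-1}G_{n,k}=\widetilde{G}_{n,k}$ to convert $Q$ into $\widetilde{Q}$, produces the ratios $\mu(\widetilde{G}_{n,k}V^{\alpha})/\mu(\widetilde{Q}_{n,k:n}(1))$ displayed in (\ref{eq:theorem_v_norma_bounds_Phi_state}).

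The main obstacle is the bookkeeping in (i)--(ii): the natural drift function $V_{n,k}$ varies with $n$ and $k$ through the factor $M_{n,k+1}(\widetilde{Q}_{n,k+1:n}(1))$, so one must check that the coupling bound, which is typically stated for a fixed drift function, can be applied with these shifted drift functions on the two sides of the iteration, with constants that do not depend on $n$ or $k$. This is exactly where (A\ref{hyp:norm_const_bound_below}) is indispensable: it gives a uniform lower bound $M_{n,j}(\widetilde{Q}_{n,j:n}(1))(x)\geq K$ (after another invocation, now of (A\ref{hyp:norm_const_bound_below}) with $\mu$ replaced by a point mass or its one-step image under the minorization measure $\nu_{d}$), ensuring $V_{n,k}\leq K^{-1}V$ uniformly, which lets us freely convert between $V$-norm and $V_{n,k}$-norm estimates at the cost of an absolute constant absorbed into $M$.
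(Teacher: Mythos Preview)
Your proposal is correct and follows essentially the same architecture as the paper: bi-variate drift/minorization via Proposition~\ref{pro:bi_var_drift} with constants made uniform through (A\ref{hyp:norm_const_bound_below}), then \citet[Theorem~8]{mc:theory:DMR:04}, then the representation (\ref{eq:transport_in_terms_of_S}); the paper handles $\alpha<1$ by applying Jensen to $S_{n,k}^{*}$ acting on $\bar V_{n,k}^{\alpha}$ rather than to $M_{n,k}$ acting on $V^{\alpha}$, but both routes work. One small slip in your step~(iii): $V_{n,k}\cdot\widetilde{Q}_{n,k:n}(1)=\widetilde{G}_{n,k}V$, not $V$, which is exactly what produces the $\widetilde G_{n,k}V^{\alpha}$ numerator you correctly write down anyway.
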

The proof of Theorem \ref{thm:v_norm_bound} is postponed. It involves
the bi-variate drift functions identified in the following proposition.
\begin{prop}
\label{pro:bi_var_drift}Assume (A\ref{hyp:G_bounded-1}) and (A\ref{hyp:fosterlyapunov})
and let $V$, $d_{0}$ and $\lambda$ be as in (A\ref{hyp:fosterlyapunov}).
Then for all $d\geq d_{0}$, and all $\bar{\lambda}\in(\lambda,1)$
there exists $\bar{d}\geq d$, and for each $n\in\mathbb{N},$ there
exist 

$\quad$

$\quad\quad$1) a collection of functions $\left\{ \bar{V}_{n,k};0\leq k\leq n\right\} $,
with each $\bar{V}_{n,k}:\mathsf{X}\times\mathsf{X}\rightarrow[1,\infty)$
and 

$\quad\quad$$\quad$ $\bar{V}_{n,n}(x,x')=\frac{1}{2}\left[V(x)+V(x')\right]$;

$\quad\quad$2) collections $\left\{ \bar{\epsilon}_{n,k};1\leq k\leq n\right\} $,
and $\left\{ \overline{b}_{n,k};0\leq k\leq n-1\right\} $ depending
on $d$ and $\bar{d}$, with each

$\quad\quad$$\quad$ $\bar{\epsilon}_{n,k}\in(0,1]$ and each $\bar{b}_{n,k}\in(0,\infty)$; 

$\quad\quad$3) a collection of probability measures $\left\{ \bar{\nu}_{n,k};1\leq p\leq n\right\} $,
depending on $\bar{d}$, with each

$\quad\quad$$\quad$ $\bar{\nu}_{n,k}\in\mathcal{P}(\mathsf{X})$;

$\quad$ 

such that for\textup{ $1\leq k\leq n$,}

\begin{eqnarray}
S_{n,k}(x,\cdot)\wedge S_{n,k}(x',\cdot) & \geq & \bar{\epsilon}_{n,k}\bar{\nu}_{n,k}(\cdot),\quad\forall(x,x')\in\bar{\mathsf{C}}_{\bar{d}},\label{eq:bi_var_minor}\\
\nonumber \\S_{n,k}^{*}\bar{V}_{n,k}(x,x') & \leq & \bar{\lambda}\bar{V}_{n,k-1}(x,x')+\overline{b}_{n,k-1}\mathbb{I}_{\bar{\mathsf{C}}_{\bar{d}}}(x,x'),\quad\forall(x,x')\in\mathsf{X}\times\mathsf{X},\label{eq:bi_var_drift-1}\end{eqnarray}
where $\left\{ \bar{\mathsf{C}}_{\bar{d}}:=(x,x'):V(x)\leq\bar{d},V(x')\leq\bar{d}\right\} $,\textup{
$S_{n,k}^{*}:\mathsf{X}\times\mathsf{X}\times\mathcal{B}(\mathsf{X}\times\mathsf{X})\rightarrow[0,1]$
is defined by \begin{eqnarray*}
S_{n,k}^{*}\left(\left(x,x'\right),d(y,y')\right) & = & \begin{cases}
S_{n,k}(x,dy)S_{n,k}(x',dy'), & (x,x')\notin\bar{\mathsf{C}}_{\bar{d}},\\
\bar{R}_{n,k}\left(\left(x,x\right),d(y,y')\right), & (x,x')\in\bar{\mathsf{C}}_{\bar{d}}\end{cases}\end{eqnarray*}
}and $\bar{R}_{n,k}:\mathsf{X}\times\mathsf{X}\times\mathcal{B}(\mathsf{X}\times\mathsf{X})\rightarrow[0,1]$
is defined by

\textup{\begin{eqnarray*}
\bar{R}_{n,k}\left(\left(x,x\right),d(y,y')\right) & = & \frac{1}{(1-\bar{\epsilon}_{n,k})^{2}}\left(S_{n,k}(x,dy)-\bar{\epsilon}_{n,k}\bar{\nu}_{n,k}(dy)\right)\left(S_{n,k}(x',dy')-\bar{\epsilon}_{n,k}\bar{\nu}_{n,k}(dy')\right).\end{eqnarray*}
}

Furthermore,\textup{\begin{eqnarray}
\inf_{n\geq1}\inf_{1\leq k\leq n}\bar{\epsilon}_{n,k}>0, & \text{ and } & \sup_{n\geq1}\sup_{0\leq k\leq n-1}\bar{b}_{n,k}<+\infty.\label{eq:eps_n_p_bounded_below}\end{eqnarray}
}\end{prop}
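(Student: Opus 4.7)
The plan is to build the bi-variate structure for $\{S_{n,k}\}$ by coupling two copies of the uni-variate drift/minorization from Lemma~\ref{lem:non_hom_drift_prelim}, exploiting the parametric family of small sets $\{\mathsf{C}_d\}_{d\geq d_0}$ in (A\ref{hyp:fosterlyapunov}) so that the coupling level $\bar d$ can be chosen large. For any $d\geq d_0$, the quadruple $(\mathsf{C}_d,\varepsilon_d,b_d,\nu_d)$ verifies (A\ref{hyp:fosterlyapunov_prelim}), so Lemma~\ref{lem:non_hom_drift_prelim} supplies the uni-variate minorization $S_{n,k}(x,\cdot)\geq \epsilon^{(d)}_{n,k}\nu^{(d)}_{n,k}$ on $\mathsf{C}_d$ and the uni-variate drift $S_{n,k}V_{n,k}(x)\leq \lambda V_{n,k-1}(x)+b^{(d)}_{n,k-1}\mathbb{I}_{\mathsf{C}_d}(x)$, where $V_{n,k}$ is as in (\ref{eq:V_n_p_prelim}) and does not depend on $d$. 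I would then take $\bar V_{n,k}(x,x'):=\tfrac12[V_{n,k}(x)+V_{n,k}(x')]$, which automatically satisfies $\bar V_{n,n}(x,x')=\tfrac12[V(x)+V(x')]$ since $V_{n,n}=V$.

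For the bi-variate minorization on $\bar{\mathsf{C}}_{\bar d}$, both coordinates lie in $\mathsf{C}_{\bar d}$, so the uni-variate minorization applied at level $\bar d$ shows $S_{n,k}(x,\cdot)\wedge S_{n,k}(x',\cdot)\geq \epsilon^{(\bar d)}_{n,k}\nu^{(\bar d)}_{n,k}$, and I set $\bar\epsilon_{n,k}:=\epsilon^{(\bar d)}_{n,k}$, $\bar\nu_{n,k}:=\nu^{(\bar d)}_{n,k}$. For the bi-variate drift off $\bar{\mathsf{C}}_{\bar d}$, where $S^*_{n,k}=S_{n,k}\otimes S_{n,k}$, summing two copies of the uni-variate drift at level $\bar d$ yields
\begin{equation*}
S^*_{n,k}\bar V_{n,k}(x,x')\leq \lambda\bar V_{n,k-1}(x,x')+\tfrac{b^{(\bar d)}_{n,k-1}}{2}\bigl[\mathbb{I}_{\mathsf{C}_{\bar d}}(x)+\mathbb{I}_{\mathsf{C}_{\bar d}}(x')\bigr].
\end{equation*}
Since $(x,x')\notin\bar{\mathsf{C}}_{\bar d}$ forces at least one of $V(x),V(x')>\bar d$, I have $\bar V_{n,k-1}(x,x')>\bar d/2$ (because $V_{n,k-1}\geq V$, using $M_{n,k}(\widetilde Q_{n,k:n}(1))\leq 1$) while the indicator sum is at most one; so choosing $\bar d$ large enough that $\sup_{n,k}b^{(\bar d)}_{n,k-1}\leq(\bar\lambda-\lambda)\bar d$ absorbs the residual $b$-term into $(\bar\lambda-\lambda)\bar V_{n,k-1}$, giving $S^*_{n,k}\bar V_{n,k}\leq\bar\lambda\bar V_{n,k-1}$ off $\bar{\mathsf{C}}_{\bar d}$.

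On $\bar{\mathsf{C}}_{\bar d}$, the pointwise majorization $\bar R_{n,k}\leq(1-\bar\epsilon_{n,k})^{-2}\,S_{n,k}\otimes S_{n,k}$, combined with the uni-variate drift and with the minorization-based lower bound $M_{n,k}(\widetilde Q_{n,k:n}(1))(x)\geq\varepsilon_{\bar d}\nu_{\bar d}(\widetilde Q_{n,k:n}(1))$ valid for $x\in\mathsf{C}_{\bar d}$, bounds $S^*_{n,k}\bar V_{n,k}$ by a finite constant, from which $\bar b_{n,k-1}$ is read off. The uniformities~(\ref{eq:eps_n_p_bounded_below}) then reduce to a uniform-in-$(n,k)$ lower bound on $\nu_{\bar d}(\widetilde Q_{n,k:n}(1))$, which is precisely what (A\ref{hyp:norm_const_bound_below}) provides with $\mu=\nu_{\bar d}$ (valid since $\nu_{\bar d}(V)<+\infty$ by (A\ref{hyp:fosterlyapunov})). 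The main obstacle is the choice of $\bar d$ in the off-$\bar{\mathsf{C}}_{\bar d}$ step: because $b^{(\bar d)}_{n,k-1}=b_{\bar d}/[\varepsilon_{\bar d}\nu_{\bar d}(\widetilde Q_{n,k:n}(1))]$ itself depends on $\bar d$ through $b_{\bar d},\varepsilon_{\bar d},\nu_{\bar d}$ and through the constant $K(\nu_{\bar d},\lambda,V,b_{\bar d})$ in (A\ref{hyp:norm_const_bound_below}), the inequality $\bar d(\bar\lambda-\lambda)\geq b_{\bar d}/[\varepsilon_{\bar d}K(\nu_{\bar d},\lambda,V,b_{\bar d})]$ carries $\bar d$ on both sides; its feasibility rests on the freedom to let $d$ range over all of $[d_0,\infty)$ in (A\ref{hyp:fosterlyapunov}), and verifying this is what the application in Section~\ref{sec:Application} must accommodate.
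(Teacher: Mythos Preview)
Your overall architecture is right and matches the paper's: take $\bar V_{n,k}(x,x')=\tfrac12[V_{n,k}(x)+V_{n,k}(x')]$, establish the bi-variate minorization on $\bar{\mathsf C}_{\bar d}$ from the uni-variate one, verify the bi-variate drift separately off and on $\bar{\mathsf C}_{\bar d}$, and deduce the uniformities from (A\ref{hyp:norm_const_bound_below}). The on-$\bar{\mathsf C}_{\bar d}$ bound via $\bar R_{n,k}\leq (1-\bar\epsilon_{n,k})^{-2}S_{n,k}\otimes S_{n,k}$ is also fine (the paper instead expands $\bar R_{n,k}\bar V_{n,k}$ and drops the negative $\bar\nu_{n,k}(V_{n,k})$ term, which is equivalent in effect).

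However, there is a genuine gap in your off-$\bar{\mathsf C}_{\bar d}$ step, and you yourself flag it without resolving it. You apply the uni-variate drift at level $\bar d$, producing the residual constant $b^{(\bar d)}_{n,k-1}=b_{\bar d}/[\varepsilon_{\bar d}\,\nu_{\bar d}(\widetilde Q_{n,k:n}(1))]$, and then seek $\bar d$ with $(\bar\lambda-\lambda)\bar d\geq b_{\bar d}/[\varepsilon_{\bar d}\,K(\nu_{\bar d},\lambda,V,b_{\bar d})]$. Both sides depend on $\bar d$, and under (A\ref{hyp:fosterlyapunov}) alone nothing prevents $b_{\bar d}$ from growing at least linearly in $\bar d$ (in the application of Section~\ref{sec:Application} one has $b_{\bar d}=C_b\bar d$), while $\varepsilon_{\bar d}$ and $K(\nu_{\bar d},\dots)$ may deteriorate. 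So this fixed-point inequality need not be solvable, and deferring the issue to Section~\ref{sec:Application} does not prove the proposition.

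The paper's resolution is to \emph{decouple the two levels}. One keeps the drift at the original fixed level $d$, giving the residual $b_d/[\varepsilon_d\,\nu_d(\widetilde Q_{n,k:n}(1))]\leq b_d/[\varepsilon_d\,K(\nu_d,\lambda,V,b_d)]$, a number that does not involve $\bar d$. Then one \emph{defines}
\[
\bar d:=\Bigl[\tfrac{b_d}{\varepsilon_d(\bar\lambda-\lambda)K(\nu_d,\lambda,V,b_d)}-1\Bigr]\vee d,
\]
so that $b_d/[\varepsilon_d K(\nu_d,\dots)]\leq(\bar\lambda-\lambda)(\bar d+1)$ automatically. Since $\mathsf C_d\subseteq\mathsf C_{\bar d}$, the level-$d$ indicator term is still supported where needed; and off $\bar{\mathsf C}_{\bar d}$ at most one of $\mathbb I_{\mathsf C_{\bar d}}(x),\mathbb I_{\mathsf C_{\bar d}}(x')$ is nonzero while $V(x)+V(x')>\bar d+1$, so $(\bar d+1)[\mathbb I_{\mathsf C_{\bar d}}(x)+\mathbb I_{\mathsf C_{\bar d}}(x')]\leq V(x)+V(x')\leq V_{n,k-1}(x)+V_{n,k-1}(x')$, which absorbs the residual into $(\bar\lambda-\lambda)\bar V_{n,k-1}$. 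The minorization, by contrast, \emph{is} taken at level $\bar d$, exactly as you propose. In short: drift at level $d$, minorization at level $\bar d$, with $\bar d$ defined from the level-$d$ constants. Once you make this change the rest of your argument goes through.
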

\begin{proof}
Throughout the proof, expressions featuring indices $n$ and $k$
hold for all $n\in\mathbb{N}$ and $1\leq k\leq n$, unless stated
otherwise. 

Fix $d\geq d_{0}$ and $\bar{\lambda}\in(\lambda,1)$. Then let $\varepsilon_{d}$,
$b_{d}$ and $\nu_{d}$ be the corresponding constants and minorizing
measure from (A\ref{hyp:fosterlyapunov}). Let $K\left(\nu_{d},\lambda,V,b_{d}\right)$
be the constant of assumption (A\ref{hyp:norm_const_bound_below})
corresponding to $\nu_{d}$ and $b_{d}$, and set \begin{eqnarray*}
\bar{d} & := & \left[\frac{b_{d}}{\varepsilon_{d}\left(\bar{\lambda}-\lambda\right)K\left(\nu_{d},\lambda,V,b_{d}\right)}-1\right]\vee d.\end{eqnarray*}
Then for equation (\ref{eq:bi_var_minor}), under assumption (A\ref{hyp:fosterlyapunov}),
there exists $\varepsilon_{\bar{d}}$ and $\nu_{\bar{d}}$ such that\begin{eqnarray*}
S_{n,k}(x,A) & \geq & \varepsilon_{\bar{d}}\cdot\nu_{\bar{d}}\left(\widetilde{Q}_{n,k:n}(1)\mathbb{I}_{A}\right)=\varepsilon_{\bar{d}}\cdot\nu_{\bar{d}}\left(\widetilde{Q}_{n,k:n}(1)\right)\frac{\nu_{\bar{d}}\left(\widetilde{Q}_{n,k:n}(1)\mathbb{I}_{A}\right)}{\nu_{\bar{d}}\left(\widetilde{Q}_{n,k:n}(1)\right)},\end{eqnarray*}
for all $x\in\mathsf{C}_{\bar{d}}$. Then setting \begin{eqnarray}
\bar{\epsilon}_{n,k} & := & \varepsilon_{\bar{d}}\cdot\nu_{\bar{d}}\left(\widetilde{Q}_{n,k:n}(1)\right),\quad\bar{\nu}_{n,k}(A):=\frac{\nu_{\bar{d}}\left(\widetilde{Q}_{n,k:n}(1)\mathbb{I}_{A}\right)}{\nu_{\bar{d}}\left(\widetilde{Q}_{n,k:n}(1)\right)},\quad A\in\mathcal{B}\left(\mathsf{X}\right),\label{eq:eps_n_p_bounded_below_proof1}\end{eqnarray}
establishes (\ref{eq:bi_var_minor}). The first part of (\ref{eq:eps_n_p_bounded_below})
is an immediate consequence of (A\ref{hyp:norm_const_bound_below})
and (\ref{eq:eps_n_p_bounded_below_proof1}):

\begin{equation}
\varepsilon_{\bar{d}}\cdot\nu_{\bar{d}}\left(\widetilde{Q}_{n,k:n}(1)\right)\geq\varepsilon_{\bar{d}}K\left(\nu_{\bar{d}},\lambda,V,b_{\bar{d}}\right).\label{eq:eps_n_p_bouned_below_proof2}\end{equation}
Let $\left\{ V_{n,k}\right\} $ be as defined in (\ref{eq:V_n_p_prelim}).
Consider the collection of bi-variate drift functions $\left\{ \bar{V}_{n,k};0\leq k\leq n\right\} $,
with each $\bar{V}_{n,k}:\mathsf{X}\times\mathsf{X}\rightarrow[1,\infty)$
defined by\begin{eqnarray}
\bar{V}_{n,k}(x,x') & := & \frac{1}{2}\left[V_{n,k}(x)+V_{n,k}(x')\right].\label{eq:bi_var_drift_in_terms_of_uni}\end{eqnarray}
We now proceed to establish the bi-variate drift condition of equation
(\ref{eq:bi_var_drift-1}). First, following the same arguments as
in the proof of Lemma \ref{lem:non_hom_drift_prelim}, under (A\ref{hyp:fosterlyapunov}),
\begin{eqnarray}
S_{n,k}V_{n,k}(x) & \leq & \lambda V_{n,k-1}(x)+b_{d}\frac{\mathbb{I}_{\mathsf{C}_{d}}(x)}{\varepsilon_{d}\cdot\nu_{d}\left(\widetilde{Q}_{n,k:n}(1)\right)},\quad\forall x\in\mathsf{X}.\label{eq:uni_var_nh_drift}\end{eqnarray}
From (\ref{eq:uni_var_nh_drift}), for $(x,x')\notin\bar{\mathsf{C}}_{\bar{d}}$
we have\begin{eqnarray*}
 &  & S_{n,k}^{*}\bar{V}_{n,k}(x,x')\\
 &  & \leq\frac{\lambda}{2}\left[V_{n,k-1}(x)+V_{n,k-1}(x')\right]+\frac{b_{d}}{\varepsilon_{d}K\left(\nu_{d},\lambda,V,b_{d}\right)}\frac{1}{2}\left[\mathbb{I}_{\mathsf{C}_{d}}(x)+\mathbb{I}_{\mathsf{C}_{d}}(x')\right]\\
 &  & \leq\frac{\lambda}{2}\left[V_{n,k-1}(x)+V_{n,k-1}(x')\right]+\left(\bar{\lambda}-\lambda\right)\left(\bar{d}+1\right)\frac{1}{2}\left[\mathbb{I}_{\mathsf{C}_{\bar{d}}}(x)+\mathbb{I}_{\mathsf{C}_{\bar{d}}}(x')\right]\\
 &  & \leq\frac{\lambda}{2}\left[V_{n,k-1}(x)+V_{n,k-1}(x')\right]+\left(\bar{\lambda}-\lambda\right)\frac{1}{2}\left[V(x)+V(x')\right]\left[\mathbb{I}_{\mathsf{C}_{\bar{d}}}(x)+\mathbb{I}_{\mathsf{C}_{\bar{d}}}(x')\right]\\
 &  & \leq\frac{\lambda}{2}\left[V_{n,k-1}(x)+V_{n,k-1}(x')\right]+\left(\bar{\lambda}-\lambda\right)\frac{1}{2}\left[V_{n,k-1}(x)+V_{n,k-1}(x')\right]\left[\mathbb{I}_{\mathsf{C}_{\bar{d}}}(x)+\mathbb{I}_{\mathsf{C}_{\bar{d}}}(x')\right]\\
 &  & \leq\bar{\lambda}\bar{V}_{n,k-1}(x,x'),\end{eqnarray*}
where for the first inequality (A\ref{hyp:norm_const_bound_below})
has been applied, the second inequality is due to the definition of
$\bar{d}$ and the penultimate inequality is due to the definition
of $\bar{V}_{n,k-1}$. For all $(x,x')\in\mathsf{\bar{C}}_{\bar{d}}$,\begin{eqnarray}
 &  & S_{n,k}^{*}\bar{V}_{n,k}(x,x')\nonumber \\
 &  & =\bar{R}_{n,k}\bar{V}_{n,k}(x,x')\nonumber \\
 &  & =\frac{1}{2(1-\bar{\epsilon}_{n,k})}\left[S_{n,k}V_{n,k}(x)+S_{n,k}V_{n,k}(x')-2\epsilon_{n,k}\bar{\nu}_{n,k}\left(V_{n,k}\right)\right]\nonumber \\
 &  & \leq\frac{\lambda}{2(1-\bar{\epsilon}_{n,k})}\left[V_{n,k-1}(x)+V_{n,k-1}(x')\right]+\frac{b_{d}}{2(1-\bar{\epsilon}_{n,k})\varepsilon_{d}\nu_{d}\left(\widetilde{Q}_{n,k:n}(1)\right)}\left[\mathbb{I}_{\mathsf{C}_{d}}(x)+\mathbb{I}_{\mathsf{C}_{d}}(x')\right]\nonumber \\
 &  & \leq\frac{\lambda\bar{d}}{(1-\bar{\epsilon}_{n,k})}\frac{1}{\inf_{x:V(x)\leq\bar{d}}M_{n,k}\left(\widetilde{Q}_{n,k:n}(1)\right)(x)}+\frac{b_{d}}{(1-\bar{\epsilon}_{n,k})\varepsilon_{d}\nu_{d}\left(\widetilde{Q}_{n,k:n}(1)\right)}=:\bar{b}_{n,k-1},\label{eq:defn_bar_b_n_p-1}\end{eqnarray}
where equation (\ref{eq:uni_var_nh_drift}) has been used. This concludes
the proof of equation (\ref{eq:bi_var_drift-1}). Applying (A\ref{hyp:norm_const_bound_below})
to the denominator terms in (\ref{eq:defn_bar_b_n_p-1}) and using
(\ref{eq:eps_n_p_bounded_below_proof1})-(\ref{eq:eps_n_p_bouned_below_proof2})
establishes the remaining part of equation (\ref{eq:eps_n_p_bounded_below}). 
\end{proof}

\begin{proof}
\emph{(Theorem \ref{thm:v_norm_bound}).} Let $d_{0}$ and $\lambda$
be as in (A\ref{hyp:fosterlyapunov}). Set $d\geq d_{0}$, $\bar{\lambda}\in(\lambda,1)$
and let $\bar{d}$, $\left\{ \bar{\epsilon}_{n,k}\right\} $, $\left\{ \bar{b}_{n,k}\right\} $,
$\left\{ \bar{R}_{n,k}\right\} $, and$\left\{ \bar{V}_{n,k}\right\} $
be as in Proposition \ref{pro:bi_var_drift}. The latter verifies
conditions (NS1) and (NS2) of \citet{mc:theory:DMR:04}. Consequently
\citep[Theorem 8]{mc:theory:DMR:04} may be applied. For $\alpha=1$,
the uniform bounds in equation (\ref{eq:eps_n_p_bounded_below}) of
Proposition \ref{pro:bi_var_drift}, combined with standard manipulations
of the bounds of \citet[Theorem 8]{mc:theory:DMR:04} (details omitted
for brevity) show that there exists a finite constant $M$ and $\rho<1$
such that

\begin{eqnarray}
\left\Vert \mu S_{n,k}\ldots S_{n,n}-\mu S_{n,k}\ldots S_{n,n}\right\Vert _{V} & \leq & M\rho^{n-k+1}\left[\mu\left(V_{n,k-1}\right)+\mu'\left(V_{n,k-1}\right)\right],\label{eq:theorem_quant_bound_proof}\end{eqnarray}
Noting that from equation (\ref{eq:transport_in_terms_of_S}),\begin{eqnarray*}
\Phi_{n,k:n}(\mu)(A) & = & \frac{\mu\left(\widetilde{G}_{n,k}M_{n,k+1}\left(\widetilde{Q}_{n,k+1:n}(1)\right)S_{n,k+1}\ldots S_{n,n}(A)\right)}{\mu\left(\widetilde{Q}_{n,k:n}(1)\right)},\end{eqnarray*}
equation (\ref{eq:theorem_v_norma_bounds_Phi_state}) holds due to
(\ref{eq:theorem_quant_bound_proof}) and the definition of $V_{n,k-1}$
given in equation (\ref{eq:V_n_p_prelim}). For the case $\alpha\in(0,1)$,
due to Jensen's inequality and the fact that for any two non-negative
reals $a,b$ and $\alpha\in[0,1]$, $(a+b)^{\alpha}\leq a^{\alpha}+b^{\alpha}$,
we have that whenever equation (\ref{eq:bi_var_drift-1}) of Proposition
\ref{pro:bi_var_drift} holds, \begin{eqnarray*}
S_{n,k}^{*}\left(\bar{V}_{n,k}^{\alpha}\right)(x,x') & \leq & \left(\bar{\lambda}\bar{V}_{n,k-1}(x,x')+\bar{b}_{n,k-1}\mathbb{I}_{\bar{\mathsf{C}}}(x,x')\right)^{\alpha}\\
 & \leq & \bar{\lambda}^{\alpha}\bar{V}_{n,k-1}^{\alpha}(x,x')+\bar{b}_{n,k-1}^{\alpha}\mathbb{I}_{\bar{\mathsf{C}}}(x,x'),\quad\forall(x,x')\in\mathsf{X}\times\mathsf{X}.\end{eqnarray*}
and for $\bar{V}_{n,k-1}$ given in equation (\ref{eq:bi_var_drift_in_terms_of_uni}),
\begin{eqnarray*}
\bar{V}_{n,k-1}^{\alpha}(x,x') & \leq & \frac{1}{2^{\alpha}}\left[V_{n,k}^{\alpha}(x)+V_{n,k}^{\alpha}(x')\right].\end{eqnarray*}
The arguments as for the case $\alpha=1$ are then repeated essentially
replacing $\bar{V}_{n,k}$, $\bar{\lambda}$, $\overline{b}_{n,k-1}$
by $\bar{V}_{n,k}^{\alpha}$, $\bar{\lambda}^{\alpha}$ and $\overline{b}_{n,k-1}^{\alpha}$
respectively, in order to establish equation (\ref{eq:theorem_v_norm_bounds_statement})
and thus (\ref{eq:theorem_v_norma_bounds_Phi_state})\@. The details
are omitted for brevity.
\end{proof}

\subsection{Verifying assumption (A\ref{hyp:norm_const_bound_below})\label{sub:Verifying-assumption-(Adeterm)}}

The following lemma illustrates that (A\ref{hyp:norm_const_bound_below})
can be verified under a generic condition on the decay in $x$ of
the potential functions $\left\{ G_{n,k}\right\} $ specified via
$\left\{ U_{n,k}\right\} $, relative to the drift function $V$ of
assumption (A\ref{hyp:fosterlyapunov}). 
\begin{lem}
\label{lem:norm_const_bounded_below} Assume (A\ref{hyp:G_bounded-1})
and (A\ref{hyp:fosterlyapunov}). Let $V$, $d_{0}$ and $\lambda$
be as in (A\ref{hyp:fosterlyapunov}) and assume\[
\sup_{n\geq1}\sup_{0\leq k\leq n-1}\sup_{x\in\mathsf{X}}\frac{U_{n,k}(x)}{V(x)}<+\infty.\]
For any $d\geq d_{0}$ let $b_{d}$ be the corresponding constant
of (A\ref{hyp:fosterlyapunov}). Then there exists a positive, finite
constant $C$ depending only on $\lambda$ and $b_{d}$ such that
for any $\mu\in\mathcal{P}(\mathsf{X})$ with $\mu(V)<+\infty$,\begin{eqnarray}
\inf_{n\geq1}\inf_{0\leq k\leq n}\mu\left(\widetilde{Q}_{n,k:n}(1)\right) & \geq & \exp\left[-C\mu(V)\right].\label{eq:norm_const_unif bound}\end{eqnarray}
\end{lem}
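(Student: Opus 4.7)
The plan is to exploit the fact that $\widetilde{G}_{n,k}(x) = \exp(-U_{n,k}(x)/n)$ has a log that vanishes at rate $1/n$, so a product of $n-k$ such factors can be controlled by the time average of $V$ along a Markov chain trajectory, which is itself tame thanks to the drift condition in (A\ref{hyp:fosterlyapunov}). Write $C_U := \sup_{n,k,x} U_{n,k}(x)/V(x) < +\infty$ for the constant supplied by the hypothesis. Unrolling the semigroup definition (\ref{eq:Q_semigroup}) for $\widetilde{Q}_{n,k+1},\dots,\widetilde{Q}_{n,n}$ and using the convention $\widetilde{Q}_{n,n:n}=Id$, the value of $\widetilde{Q}_{n,k:n}(1)$ at $x$ is the expectation, under the Markov chain $(X_\ell)$ with $X_0=x$ and $X_{\ell+1}\sim M_{n,k+1+\ell}(X_\ell,\cdot)$, of $\prod_{j=k}^{n-1}\widetilde{G}_{n,j}(X_{j-k})$. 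The case $k=n$ is trivial (the integrand is $1$), so I focus on $k<n$.

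First I would replace the product by $\exp\!\left(-\tfrac{1}{n}\sum_{j=k}^{n-1}U_{n,j}(X_{j-k})\right)$ and lower bound each $U_{n,j}(X_{j-k})$ by $C_U V(X_{j-k})$. Applying Jensen's inequality (convexity of $t\mapsto e^{-t}$) under the chain's law gives
\begin{equation*}
\widetilde{Q}_{n,k:n}(1)(x) \;\geq\; \exp\!\left(-\frac{C_U}{n}\sum_{\ell=0}^{n-k-1}\mathbb{E}_x[V(X_\ell)]\right).
\end{equation*}
Next I iterate the drift bound in (\ref{eq:foster_lyap2}), which yields $\mathbb{E}_x[V(X_\ell)]\leq \lambda^\ell V(x)+ b_d/(1-\lambda)$. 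Summing over $\ell=0,\dots,n-k-1$ and dividing by $n$ gives $\tfrac{C_U}{n}\sum_\ell \mathbb{E}_x[V(X_\ell)]\leq \tfrac{C_U V(x)}{n(1-\lambda)} + \tfrac{C_U b_d(n-k)}{n(1-\lambda)}$; the crucial observation is that the factor $1/n$ in the exponent exactly cancels the number of summands, so using $(n-k)/n\leq 1$ produces a bound uniform in $n$ and $k$, of the form $aV(x)/n + c$ with $a,c$ depending only on $C_U,\lambda,b_d$.

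Integrating against $\mu$ and applying Jensen's inequality a second time (now using the convexity of $e^{-\cdot}$ on the probability space $\mathsf{X}$ with reference measure $\mu$) produces
\begin{equation*}
\mu\!\left(\widetilde{Q}_{n,k:n}(1)\right) \;\geq\; \exp\!\left(-\frac{C_U\,\mu(V)}{n(1-\lambda)} - \frac{C_U b_d}{1-\lambda}\right),
\end{equation*}
which is valid because $\mu(V)<+\infty$ by hypothesis. For $n\geq 1$ the first term is at most $C_U\mu(V)/(1-\lambda)$, and since $V\geq 1$ implies $\mu(V)\geq 1$, the constant $C_U b_d/(1-\lambda)$ can be absorbed into $\mu(V)$, giving (\ref{eq:norm_const_unif bound}) with $C=C_U(1+b_d)/(1-\lambda)$.

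I do not anticipate a real obstacle here: the only subtlety is to keep track of the indexing in the semigroup expansion and to exploit the $1/n$ factor in $\widetilde{G}_{n,k}=\exp(-U_{n,k}/n)$ to cancel the growth of the sum over the $n-k$ chain steps, which is exactly the ``flattening'' heuristic from the introduction. Both Jensen applications and the iteration of the drift condition are entirely standard once the expectation representation of $\widetilde{Q}_{n,k:n}(1)$ is in place.
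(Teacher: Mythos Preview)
Your argument is correct and matches the paper's proof: Jensen's inequality pushes the exponential outside the expectation, and iteration of the drift condition in (A\ref{hyp:fosterlyapunov}) then controls the resulting time-averaged sum $\tfrac{1}{n}\sum_\ell U_{n,\ell}$ uniformly in $n$ and $k$ (the paper applies Jensen once to the joint law $\mu(dx_k)\prod_\ell M_{n,\ell}(x_{\ell-1},dx_\ell)$, whereas you do it first pointwise in $x$ and then against $\mu$, which is an immaterial reorganisation). One slip of the pen: where you write ``lower bound each $U_{n,j}$ by $C_U V$'' you mean \emph{upper} bound, so that $\exp(-U/n)\geq\exp(-C_UV/n)$; the displayed inequality you obtain is the correct one.
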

\begin{proof}
Firstly, $\mu\left(\widetilde{Q}_{n,n:n}(1)\right)=\mu(1)$ and by
Jensen's inequality, \begin{eqnarray*}
\mu\left(\widetilde{Q}_{n,n-1:n}(1)\right) & =\mu\left(\widetilde{G}_{n,n-1}\right) & \geq\exp\left[-\frac{1}{n}\mu(U_{n,n-1})\right]\geq\exp\left[-\mu(V)\left\Vert U_{n,n-1}\right\Vert _{V}\right].\end{eqnarray*}
For $1\leq k<n-1$, by Jensen's inequality,\begin{eqnarray}
\mu\left(\widetilde{Q}_{n,k:n}(1)\right) & = & \int_{\mathsf{X}^{n-k+1}}\exp\left(-\frac{1}{n}\sum_{\ell=k}^{n-1}U_{n,\ell}(x_{\ell})\right)\mu(dx_{k})\prod_{\ell=k+1}^{n}M_{n,\ell}(x_{\ell-1},dx_{\ell})\nonumber \\
 & \geq & \exp\left[-\frac{1}{n}\int_{\mathsf{X}^{n-k+1}}\left(\sum_{\ell=k}^{n-1}U_{n,\ell}(x_{\ell})\right)\mu(dx_{k})\prod_{\ell=k+1}^{n}M_{n,\ell}(x_{\ell-1},dx_{\ell})\right]\nonumber \\
 & = & \exp\left[-\frac{1}{n}\mu(U_{n,k})-\frac{1}{n}\sum_{\ell=k+1}^{n-1}\int_{\mathsf{X}}U_{n,\ell}(x_{\ell})\mu M_{n,k:\ell}(dx_{\ell})\right].\label{eq:norm_const_lem_bound}\end{eqnarray}
Iteration of the drift inequality in (A\ref{hyp:fosterlyapunov})
shows that for any $1\leq k<\ell<n$, \begin{eqnarray}
\int_{\mathsf{X}}V(x_{\ell})M_{n,k:\ell}(x_{k},dx_{\ell}) & \leq & \lambda^{\ell-k}V(x)+b_{d}\sum_{j=0}^{\ell-k-1}\lambda^{j}.\label{eq:lemm_norm_const_induct}\end{eqnarray}
It follows from (\ref{eq:lemm_norm_const_induct}) that\begin{eqnarray*}
\int_{\mathsf{X}}U_{n,\ell}(x_{\ell})\mu M_{n,k-1:\ell}(dx_{\ell}) & \leq & \left\Vert U_{n,\ell}\right\Vert _{V}\int_{\mathsf{X}}V(x_{\ell})\mu M_{n,k-1:\ell}(dx_{\ell})\leq\mu(V)+b_{d}\frac{1}{1-\lambda}\end{eqnarray*}
which combined with (\ref{eq:norm_const_lem_bound}) implies the desired
result. 
\end{proof}

\section{$\mathbb{L}_{p}$ error bounds for the particle measures\label{sec:-Error-Bounds}}

Making use of the results of section \ref{sec:Stability-of-model},
the following theorem presents an $\mathbb{L}_{p}$ bound on the error
$\eta_{n,n}^{N}(f)-\eta_{n,n}(f)$, for some possibly unbounded $f$.
This theorem rests on assumptions about the moments of the mean particle
drift, $\eta_{n,k}^{N}(V)$, and a related normalization quantity,
which are used in the proof to bound the moments of Martingale increments
associated with the particle approximation. Discussion of the (\ref{eq:E^N[eta(V)]_bounded})
is given in the appendix and both assumptions are verified in the
application of section \ref{sec:Application}.
\begin{thm}
\label{thm:L_p_bound}Assume (A\ref{hyp:G_bounded-1}), (A\ref{hyp:fosterlyapunov})
and (A\ref{hyp:norm_const_bound_below}). Let $V$ be as in (A\ref{hyp:fosterlyapunov})
and for $s>0$ an independent parameter let $t:=\frac{1+s}{s}$. Let
$p\geq1$ and $\alpha\in[0,1]$, be such that $\alpha tp\leq1$ and
$(1+s)p\leq1$, and for $\mu\in\mathcal{P}(\mathsf{X})$ assume \begin{eqnarray}
\sup_{N\geq1}\sup_{n\geq1}\sup_{1\leq k\leq n}\bar{\mathbb{E}}_{\mu}\left[\eta_{n,k}^{N}\left(\widetilde{Q}_{n,k:n}(1)\right)^{-(1+s)p}\right] & < & +\infty,\label{eq:E^N[eta(Q)]_bounded_below}\\
\sup_{N\geq1}\sup_{n\geq1}\sup_{1\leq k\leq n}\bar{\mathbb{E}}_{\mu}\left[\eta_{n,k}^{N}\left(V^{\alpha tp}\right)\right] & < & +\infty.\label{eq:E^N[eta(V)]_bounded}\end{eqnarray}
Then there exists $\rho<1$ and a finite constant $C$ depending on
$\alpha$, $\mu$, $V$, and the constants in (A\ref{hyp:G_bounded-1}),
(A\ref{hyp:fosterlyapunov}), and (A\ref{hyp:norm_const_bound_below})
such that for any $f\in\mathcal{L}_{V^{\alpha}}$, $n\in\mathbb{N}$
and $N\in\mathbb{N}$,\begin{eqnarray}
\bar{\mathbb{E}}_{\mu}\left[\left|\left(\eta_{n,n}^{N}-\eta_{n,n}\right)(f)\right|^{p}\right]^{1/p} & \leq & C\left\Vert f\right\Vert _{V^{\alpha}}\left(\frac{1-\rho^{n}}{1-\rho}\right)\frac{1}{\sqrt{N}}.\label{eq:_phi_mu_v_bounded}\end{eqnarray}

\end{thm}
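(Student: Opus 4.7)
The plan is to combine a standard telescoping decomposition of the global error with the forgetting rate supplied by Theorem \ref{thm:v_norm_bound} and a conditional moment inequality for sums of i.i.d.\ centred variables. Specifically, I begin by writing
\begin{equation*}
\eta_{n,n}^{N}(f) - \eta_{n,n}(f) \;=\; \sum_{k=0}^{n}\Delta_{n,k}^{N}, \qquad \Delta_{n,k}^{N} := \Phi_{n,k:n}(\eta_{n,k}^{N})(f) - \Phi_{n,k:n}(\Phi_{n,k}(\eta_{n,k-1}^{N}))(f),
\end{equation*}
with the convention $\Phi_{n,0}(\eta_{n,-1}^{N}) := \mu$ so that the $k=0$ term equals $\Phi_{n,0:n}(\eta_{n,0}^{N})(f) - \eta_{n,n}(f)$. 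Using the elementary identity $\Phi_{n,k:n}(\mu_1)(f) - \Phi_{n,k:n}(\mu_2)(f) = (\mu_1-\mu_2)\tilde Q_{n,k:n}(f-c)/\mu_1(\tilde Q_{n,k:n}(1))$ with $c=\Phi_{n,k:n}(\mu_2)(f)$, I obtain
\begin{equation*}
\Delta_{n,k}^{N} \;=\; \frac{\bigl(\eta_{n,k}^{N} - \Phi_{n,k}(\eta_{n,k-1}^{N})\bigr)\bigl(\tilde Q_{n,k:n}(\bar f_k)\bigr)}{\eta_{n,k}^{N}(\tilde Q_{n,k:n}(1))}, \qquad \bar f_k := f - \Phi_{n,k:n}(\Phi_{n,k}(\eta_{n,k-1}^{N}))(f),
\end{equation*}
and conditionally on $\mathcal{F}_{n,k-1}$ the numerator is an $N^{-1}$-scaled sum of centred i.i.d.\ evaluations of $\tilde Q_{n,k:n}(\bar f_k)$ under $\Phi_{n,k}(\eta_{n,k-1}^{N})$.

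Next I extract the forgetting factor from the test function using the point-mass identity $\tilde Q_{n,k:n}(\bar f_k)(x) = \tilde Q_{n,k:n}(1)(x)\,[\Phi_{n,k:n}(\delta_x)(f)-c_k]$ together with the stability bound (\ref{eq:theorem_v_norma_bounds_Phi_state}) applied with $\mu=\delta_x$, $\mu'=\Phi_{n,k}(\eta_{n,k-1}^{N})$. The crucial cancellation is that $\delta_x(\tilde Q_{n,k:n}(1))=\tilde Q_{n,k:n}(1)(x)$ annihilates the corresponding factor, leaving the pointwise estimate
\begin{equation*}
|\tilde Q_{n,k:n}(\bar f_k)(x)| \;\leq\; C\|f\|_{V^{\alpha}}\rho^{n-k}\Bigl(V^{\alpha}(x) + A_{n,k}\Bigr),
\end{equation*}
where $A_{n,k}$ is an $\mathcal{F}_{n,k-1}$-measurable quantity dominated by $\Phi_{n,k}(\eta_{n,k-1}^{N})(V^{\alpha})/\Phi_{n,k}(\eta_{n,k-1}^{N})(\tilde Q_{n,k:n}(1))$.

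For the $\mathbb{L}_p$ moment of $\Delta_{n,k}^{N}$, I then split via Hölder's inequality with conjugate exponents $t=(1+s)/s$ and $1+s$ in order to isolate the random denominator, so that $\bar{\mathbb{E}}_\mu[|\Delta_{n,k}^{N}|^p]^{1/p}$ is controlled by the product of $\bar{\mathbb{E}}_\mu[|\text{numerator}|^{pt}]^{1/(pt)}$ and $\bar{\mathbb{E}}_\mu[\eta_{n,k}^{N}(\tilde Q_{n,k:n}(1))^{-p(1+s)}]^{1/(p(1+s))}$; the latter factor is bounded uniformly in $n,k,N$ by assumption (\ref{eq:E^N[eta(Q)]_bounded_below}). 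For the former, a Marcinkiewicz--Zygmund/Rosenthal inequality applied conditionally on $\mathcal{F}_{n,k-1}$ yields a factor $N^{-1/2}$ multiplied by an appropriate moment of $\tilde Q_{n,k:n}(\bar f_k)$ under $\Phi_{n,k}(\eta_{n,k-1}^{N})$. Substituting the pointwise stability bound and using the constraint $\alpha tp\leq 1$ to dominate $V^{\alpha tp}$ by $V$, the exchangeability identity $\bar{\mathbb{E}}_\mu[\Phi_{n,k}(\eta_{n,k-1}^{N})(h)] = \bar{\mathbb{E}}_\mu[\eta_{n,k}^{N}(h)]$ combined with (\ref{eq:E^N[eta(V)]_bounded}) gives $\bar{\mathbb{E}}_\mu[|\Delta_{n,k}^{N}|^p]^{1/p} \leq C\|f\|_{V^{\alpha}}\rho^{n-k}/\sqrt{N}$. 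Summing over $k$ via Minkowski's inequality produces the geometric series $(1-\rho^{n+1})/(1-\rho)$, yielding (\ref{eq:_phi_mu_v_bounded}).

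The main technical obstacle lies in the integrability bookkeeping of the last step: the pointwise stability bound introduces the random quantity $A_{n,k}$, itself a ratio involving $\Phi_{n,k}(\eta_{n,k-1}^{N})(\tilde Q_{n,k:n}(1))$ in its denominator, and one has to verify that its higher moments can be absorbed into the available assumptions without degrading the $N^{-1/2}$ rate. The simultaneous constraints $\alpha tp\leq 1$ and the moment assumptions (\ref{eq:E^N[eta(Q)]_bounded_below})--(\ref{eq:E^N[eta(V)]_bounded}) are precisely tuned so that the Hölder/Marcinkiewicz--Zygmund partition closes, and getting this trade-off right is the delicate point of the argument.
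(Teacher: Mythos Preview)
Your overall architecture---telescoping decomposition, rewriting each summand as a ratio with numerator $(\eta_{n,k}^{N}-\Phi_{n,k}(\eta_{n,k-1}^{N}))\widetilde Q_{n,k:n}^{N}(f)$, H\"older with exponents $1+s$ and $t$, a martingale inequality for the $N^{-1/2}$ rate, and the stability bound for the $\rho^{n-k}$ factor---is exactly the paper's. The gap is precisely where you yourself flag it: the treatment of the random quantity $A_{n,k}$.

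By writing the pointwise bound as $|\widetilde Q_{n,k:n}(\bar f_k)(x)|\leq C\|f\|_{V^{\alpha}}\rho^{n-k}\bigl(V^{\alpha}(x)+A_{n,k}\bigr)$ you have already bounded the prefactor $\widetilde Q_{n,k:n}(1)(x)\leq 1$ in front of the $A_{n,k}$ term. Once that factor is discarded you are left, after the Marcinkiewicz--Zygmund step, needing moments of order $tp$ of $A_{n,k}\asymp \Phi_{n,k}(\eta_{n,k-1}^{N})(V^{\alpha})/\Phi_{n,k}(\eta_{n,k-1}^{N})(\widetilde Q_{n,k:n}(1))$. The denominator here is \emph{not} $\eta_{n,k}^{N}(\widetilde Q_{n,k:n}(1))$ (which is what (\ref{eq:E^N[eta(Q)]_bounded_below}) controls), and relating the two introduces a further random ratio $1/\eta_{n,k-1}^{N}(\widetilde G_{n,k-1})$; moreover you would need a negative moment of order $tp$ rather than $(1+s)p$, so the H\"older budget does not close under the stated hypotheses. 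This is a genuine hole, not just bookkeeping.

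The paper's device is to \emph{retain} the factor $\widetilde Q_{n,k:n}(1)(\xi_{n,k}^{(N,i)})$ and exploit it through conditioning. Working with the $S$-kernel form of Theorem~\ref{thm:v_norm_bound} (equation~(\ref{eq:theorem_v_norm_bounds_statement}) with the drift functions $V_{n,k}$) one obtains, for each particle,
\[
\bigl|\widetilde Q_{n,k:n}^{N}(f)(\xi_{n,k}^{(N,i)})\bigr|\;\leq\;M\|f\|_{V^{\alpha}}\rho^{n-k}\,\widetilde Q_{n,k:n}(1)(\xi_{n,k}^{(N,i)})\Bigl[V_{n,k}^{\alpha}(\xi_{n,k}^{(N,i)})+\tfrac{\Phi_{n,k}(\eta_{n,k-1}^{N})(\widetilde Q_{n,k:n}(1)V_{n,k}^{\alpha})}{\Phi_{n,k}(\eta_{n,k-1}^{N})(\widetilde Q_{n,k:n}(1))}\Bigr].
\]
Since $\widetilde Q_{n,k:n}(1)\leq 1$ and $p\geq 1$, one has $[\widetilde Q_{n,k:n}(1)]^{p}\leq \widetilde Q_{n,k:n}(1)$; after Jensen on the bracketed ratio, the $p$th moment of the second term becomes
\[
\bar{\mathbb E}_{\mu}\Bigl[\widetilde Q_{n,k:n}(1)(\xi_{n,k}^{(N,i)})\cdot \tfrac{\Phi_{n,k}(\eta_{n,k-1}^{N})(\widetilde Q_{n,k:n}(1)V_{n,k}^{\alpha p})}{\Phi_{n,k}(\eta_{n,k-1}^{N})(\widetilde Q_{n,k:n}(1))}\Bigr].
\]
Conditioning on $\mathcal F_{n,k}^{(N,i-1)}$ gives $\bar{\mathbb E}_{\mu}\bigl[\widetilde Q_{n,k:n}(1)(\xi_{n,k}^{(N,i)})\,\big|\,\mathcal F_{n,k}^{(N,i-1)}\bigr]=\Phi_{n,k}(\eta_{n,k-1}^{N})(\widetilde Q_{n,k:n}(1))$, which \emph{cancels the random denominator exactly}. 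What remains is $\bar{\mathbb E}_{\mu}[\Phi_{n,k}(\eta_{n,k-1}^{N})(\widetilde Q_{n,k:n}(1)V_{n,k}^{\alpha p})]\leq \bar{\mathbb E}_{\mu}[\Phi_{n,k}(\eta_{n,k-1}^{N})(V^{\alpha p})]$ (using $\widetilde Q_{n,k:n}(1)V_{n,k}\leq V$), and this is controlled directly by (\ref{eq:E^N[eta(V)]_bounded}) and exchangeability. In short: do not throw away $\widetilde Q_{n,k:n}(1)(x)$ pointwise; keep it and let it cancel the denominator after conditioning.
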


\begin{proof}
Throughout the proof $C$ denotes a constant whose value may change
on each appearance. Consider the telescoping decomposition\begin{eqnarray}
\left(\eta_{n,n}^{N}-\eta_{n,n}\right)(f) & = & \sum_{k=0}^{n}\left[\Phi_{n,k:n}\left(\eta_{n,k}^{N}\right)-\Phi_{n,k:n}\left(\Phi_{n,k}\left(\eta_{n,k-1}^{N}\right)\right)\right](f),\label{eq:telescope_decomp}\end{eqnarray}
with the convention $\Phi_{n,0}\left(\eta_{n,-1}^{N}\right):=\mu$.
For any of the terms in the summation of equation (\ref{eq:telescope_decomp}),
following the approach of \citet[page 245]{smc:theory:Dm04}, we have

\begin{eqnarray}
 &  & \left[\Phi_{n,k:n}\left(\eta_{n,k}^{N}\right)-\Phi_{n,k:n}\left(\Phi_{n,k}\left(\eta_{n,k-1}^{N}\right)\right)\right](f)\nonumber \\
 &  & =\left[\frac{\eta_{n,k}^{N}\widetilde{Q}_{n,k:n}}{\eta_{n,k}^{N}\widetilde{Q}_{n,k:n}(1)}-\frac{\Phi_{n,k}\left(\eta_{n,k-1}^{N}\right)\widetilde{Q}_{n,k:n}}{\Phi_{n,k}\left(\eta_{n,k-1}^{N}\right)\widetilde{Q}_{n,k:n}(1)}\right](f)\nonumber \\
 &  & =\frac{1}{\eta_{n,k}^{N}\widetilde{Q}_{n,k:n}(1)}\left[\eta_{n,k}^{N}\widetilde{Q}_{n,k:n}-\eta_{n,k}^{N}\widetilde{Q}_{n,k:n}(1)\frac{\Phi_{n,k}\left(\eta_{n,k-1}^{N}\right)\widetilde{Q}_{n,k:n}}{\Phi_{n,k}\left(\eta_{n,k-1}^{N}\right)\widetilde{Q}_{n,k:n}(1)}\right](f)\nonumber \\
 &  & =\frac{1}{\eta_{n,k}^{N}\widetilde{Q}_{n,k:n}(1)}\left[\eta_{n,k}^{N}-\Phi_{n,k}\left(\eta_{n,k-1}^{N}\right)\right]\widetilde{Q}_{n,k:n}^{N}(f),\label{eq:tele_1_term_decomp}\end{eqnarray}
where\[
\widetilde{Q}_{n,k:n}^{N}(f)(x):=\widetilde{Q}_{n,k:n}(f)(x)-\widetilde{Q}_{n,k:n}(1)(x)\frac{\Phi_{n,k}\left(\eta_{n,k-1}^{N}\right)\widetilde{Q}_{n,k:n}(f)}{\Phi_{n,k}\left(\eta_{n,k-1}^{N}\right)\widetilde{Q}_{n,k:n}(1)}.\]

From equations (\ref{eq:telescope_decomp}) and (\ref{eq:tele_1_term_decomp}),
for $p\geq1$,\begin{eqnarray}
 &  & \bar{\mathbb{E}}_{\mu}\left[\left|\left(\eta_{n,n}^{N}-\eta_{n,n}\right)(f)\right|^{p}\right]^{1/p}\nonumber \\
 &  & \leq\sum_{k=0}^{n}\bar{\mathbb{E}}_{\mu}\left[\left|\frac{1}{\eta_{n,k}^{N}\widetilde{Q}_{n,k:n}(1)}\left[\eta_{n,k}^{N}-\Phi_{n,k}\left(\eta_{n,k-1}^{N}\right)\right]\widetilde{Q}_{n,k:n}^{N}(f)\right|^{p}\right]^{1/p}\nonumber \\
 &  & \leq\sum_{k=0}^{n}\bar{\mathbb{E}}_{\mu}\left[\left|\frac{1}{\eta_{n,k}^{N}\widetilde{Q}_{n,k:n}(1)}\right|^{(1+s)p}\right]^{1/\left[(1+s)p\right]}\bar{\mathbb{E}}_{\mu}\left[\left|\left[\eta_{n,k}^{N}-\Phi_{n,k}\left(\eta_{n,k-1}^{N}\right)\right]\widetilde{Q}_{n,k:n}^{N}(f)\right|^{pt}\right]^{1/\left(tp\right)},\label{eq:telescope_LP_bound_strat}\end{eqnarray}
where Minkowski's and Hölder's inequalities have been applied. We
next proceed to bound each of the factors in the summands of equation
(\ref{eq:telescope_LP_bound_strat}).

Denoting\begin{eqnarray*}
\left[\eta_{n,k}^{N}-\Phi_{n,k}\left(\eta_{n,k-1}^{N}\right)\right]\widetilde{Q}_{n,k:n}^{N}(f) & = & \frac{1}{N}\sum_{i=1}^{N}T_{n,k}^{(i)},\end{eqnarray*}
 where \begin{eqnarray*}
T_{n,k}^{(i)} & := & \widetilde{Q}_{n,k:n}^{N}(f)(\xi_{n,k}^{\left(N,i\right)})-\Phi_{n,k}\left(\eta_{n,k-1}^{N}\right)\widetilde{Q}_{n,k:n}^{N}(f),\end{eqnarray*}
(with the dependence of $T_{n,k}^{(i)}$ on $N$ suppressed) we have
that for any $n\in\mathbb{N}$, $1\leq k\leq n$ and $1\leq i\leq N$,

\[
\bar{\mathbb{E}}_{\mu}\left[\left.T_{n,k}^{(i)}\right|\mathcal{F}_{n,k}^{(N,i-1)}\right]=0,\]
with the convention that $\mathcal{F}_{n,k}^{(N,0)}=\mathcal{F}_{n,k-1}^{(N,N)}$.
Next, there exists a constant $C$ such that for any $p\geq1$,\begin{eqnarray}
 &  & \bar{\mathbb{E}}_{\mu}\left[\left|T_{n,k}^{(i)}\right|^{p}\right]^{1/p}\nonumber \\
 &  & =\bar{\mathbb{E}}_{\mu}\Biggl[\Biggl|\widetilde{Q}_{n,k:n}(1)\left(\xi_{n,k}^{\left(N,i\right)}\right)\Biggl[S_{n,k+1}\ldots S_{n,n}(f)\left(\xi_{n,k}^{\left(N,i\right)}\right)\nonumber \\
 &  & \quad\left.-\left.\left.\frac{\Phi_{n,k}\left(\eta_{n,k-1}^{N}\right)\left(\widetilde{Q}_{n,k:n}(1)S_{n,k+1}\ldots S_{n,n}(f)\right)}{\Phi_{n,k}\left(\eta_{n,k-1}^{N}\right)\left(\widetilde{Q}_{n,k:n}(1)\right)}\right]\right|^{p}\right]^{1/p}\nonumber \\
 &  & \leq\rho^{n-k}M\left\Vert f\right\Vert _{V^{\alpha}}\bar{\mathbb{E}}_{\mu}\left[\left|\widetilde{Q}_{n,k:n}(1)\left(\xi_{n,k}^{\left(N,i\right)}\right)\left[V_{n,k}^{\alpha}\left(\xi_{n,k}^{\left(N,i\right)}\right)+\frac{\Phi_{n,k}\left(\eta_{n,k-1}^{N}\right)\left(\widetilde{Q}_{n,k:n}(1)V_{n,k}^{\alpha}\right)}{\Phi_{n,k}\left(\eta_{n,k-1}^{N}\right)\left(\widetilde{Q}_{n,k:n}(1)\right)}\right]\right|^{p}\right]^{1/p}\nonumber \\
 &  & \leq\rho^{n-k}M\left\Vert f\right\Vert _{V^{\alpha}}\bar{\mathbb{E}}_{\mu}\left[V^{\alpha p}\left(\xi_{n,k}^{\left(N,i\right)}\right)\right]^{1/p}\nonumber \\
 &  & \quad+\rho^{n-k}M\left\Vert f\right\Vert _{V^{\alpha}}\bar{\mathbb{E}}_{\mu}\left[\left|\widetilde{Q}_{n,k:n}(1)\left(\xi_{n,k}^{\left(N,i\right)}\right)\frac{\Phi_{n,k}\left(\eta_{n,k-1}^{N}\right)\left(\widetilde{Q}_{n,k:n}(1)V_{n,k}^{\alpha}\right)}{\Phi_{n,k}\left(\eta_{n,k-1}^{N}\right)\left(\widetilde{Q}_{n,k:n}(1)\right)}\right|^{p}\right]^{1/p}\nonumber \\
 &  & \leq\rho^{n-k}M\left\Vert f\right\Vert _{V^{\alpha}}\bar{\mathbb{E}}_{\mu}\left[V^{\alpha p}\left(\xi_{n,k}^{\left(N,i\right)}\right)\right]^{1/p}\nonumber \\
 &  & \quad+\rho^{n-k}M\left\Vert f\right\Vert _{V^{\alpha}}\bar{\mathbb{E}}_{\mu}\left[\frac{\Phi_{n,k}\left(\eta_{n,k-1}^{N}\right)\left(\widetilde{Q}_{n,k:n}(1)V_{n,k}^{\alpha p}\right)}{\Phi_{n,k}\left(\eta_{n,k-1}^{N}\right)\left(\widetilde{Q}_{n,k:n}(1)\right)}\bar{\mathbb{E}}_{\mu}\left[\left.\widetilde{Q}_{n,k:n}(1)\left(\xi_{n,k}^{(N,i)}\right)\right|\mathcal{F}_{n,k}^{\left(N,i-1\right)}\right]\right]^{1/p}\nonumber \\
 &  & \leq\rho^{n-k}M\left\Vert f\right\Vert _{V^{\alpha}}\left(\bar{\mathbb{E}}_{\mu}\left[\eta_{n,k}^{N}\left(V^{\alpha p}\right)\right]^{1/p}+\bar{\mathbb{E}}_{\mu}\left[\Phi_{n,k}\left(\eta_{n,k-1}^{N}\right)\left(V^{\alpha p}\right)\right]^{1/p}\right)\nonumber \\
 &  & \leq\rho^{n-k}C\left\Vert f\right\Vert _{V^{\alpha}},\label{eq:L_p_bound_on_increment}\end{eqnarray}
where Theorem \ref{thm:v_norm_bound}, followed by Minkowski's inequality,
Jensen's inequality, the exchangeability property of equation (\ref{eq:exch}),
Jensen's inequality again and the assumption of equation (\ref{eq:E^N[eta(V)]_bounded})
have been applied. Thus for fixed $N$, $\left\{ \left(\sum_{j=1}^{i}T_{n,k}^{(j)},\mathcal{F}_{n,k}^{(N,i)}\right);1\leq i\leq N\right\} $
is a Martingale sequence with increments bounded in $\mathbb{L}_{p}$.
It follows that when $tp\geq2,$ by the Burkholder-Davis inequality
and Minkowski's inequality, there exists a constant $C$ such that\begin{eqnarray*}
\bar{\mathbb{E}}_{\mu}\left[\left|\left[\eta_{n,k}^{N}-\Phi_{n,k}\left(\eta_{n,k-1}^{N}\right)\right]\widetilde{Q}_{n,k:n}^{N}(f)\right|^{tp}\right]^{1/(tp)} &  & \leq CN^{-1}\bar{\mathbb{E}}_{\mu}\left[\left|\sum_{i=1}^{N}\left(T_{n,k}^{(i)}\right)^{2}\right|^{tp/2}\right]^{1/(tp)}\\
 &  & \leq CN^{-1}\left(\sum_{i=1}^{N}\bar{\mathbb{E}}_{\mu}\left[\left|T_{n,k}^{(i)}\right|^{tp}\right]^{2/(tp)}\right)^{1/2},\end{eqnarray*}
and when $1<tp<2$, using the fact that for any $a,b\geq0$ and $0\leq r\leq1$,
$(a+b)^{r}\leq a^{r}+b^{r}$, \begin{eqnarray*}
\bar{\mathbb{E}}_{\mu}\left[\left|\left[\eta_{n,k}^{N}-\Phi_{n,k}\left(\eta_{n,k-1}^{N}\right)\right]\widetilde{Q}_{n,k:n}^{N}(f)\right|^{tp}\right]^{1/(tp)} &  & \leq CN^{-1}\bar{\mathbb{E}}_{\mu}\left[\left|\sum_{i=1}^{N}\left(T_{n,k}^{(i)}\right)^{2}\right|^{tp/2}\right]^{1/(tp)}\\
 &  & \leq CN^{-1}\left(\sum_{i=1}^{N}\bar{\mathbb{E}}_{\mu}\left[\left(T_{n,k}^{(i)}\right)^{tp}\right]\right)^{1/(tp)}.\end{eqnarray*}
Combining with (\ref{eq:L_p_bound_on_increment}), we conclude that
there exists a constant $C$ such that

\begin{eqnarray*}
\bar{\mathbb{E}}_{\mu}\left[\left|\left[\eta_{n,k}^{N}-\Phi_{n,k}\left(\eta_{n,k-1}^{N}\right)\right]\widetilde{Q}_{n,k:n}^{N}(f)\right|^{tp}\right]^{1/\left(tp\right)} & \leq & \rho^{n-k}C\left\Vert f\right\Vert _{V^{\alpha}}N^{^{-\left\{ tp/2\wedge\left(tp-1\right)\right\} /\left(tp\right)}},\end{eqnarray*}
for all $n\geq1$ and $1\leq k\leq n$.

The remaining terms in (\ref{eq:telescope_LP_bound_strat}) are treated
directly by the assumption of equation (\ref{eq:E^N[eta(Q)]_bounded_below}),
and therefore upon returning to (\ref{eq:telescope_LP_bound_strat})
we conclude that there exists a constant $C$ such that

\begin{eqnarray*}
\bar{\mathbb{E}}_{\mu}\left[\left|\left(\eta_{n,n}^{N}-\eta_{n,n}\right)(f)\right|^{p}\right]^{1/p} & \leq & C\left\Vert f\right\Vert _{V^{\alpha}}\frac{1}{\sqrt{N}}\sum_{k=0}^{n}\rho^{n-k}<C\left\Vert f\right\Vert _{V^{\alpha}}\left(\frac{1-\rho^{n}}{1-\rho}\right)\frac{1}{\sqrt{N}},\end{eqnarray*}
and the result holds.
\end{proof}

\section{Application \label{sec:Application}}

In this section is concerned with the case in which $\mathsf{X}=\mathbb{R}^{d}$,
$\mathcal{B}(\mathbb{R}^{d})$ is the corresponding Borel $\sigma$-algebra
and throughout consider the following structural definitions and assumptions.
\begin{itemize}
\item Let $\pi\in\mathcal{P}(\mathsf{X})$ be a target distribution admitting
a density with respect to Lebesgue measure. Also denote by $\pi$
its density. In applications of interest, this density will only be
known up to a multiplicative constant, $Z$, and denote by $\bar{\pi}$
the unnormalised density, i.e. $\pi(x)=\bar{\pi}(x)/Z,\;\; x\in\mathbb{R}^{d}$. 
\item For $\underline{\gamma}\in(0,1]$ a constant, let $\gamma:[0,1]\rightarrow[\underline{\gamma},1]$
be a non-decreasing, Lipschitz function. 
\item Let$\left\{ \pi_{\gamma};\gamma\in[\underline{\gamma},1]\right\} $
be the family of probability measures defined by\begin{eqnarray*}
\pi_{\gamma}(A) & := & \frac{\int_{A}\bar{\pi}^{\gamma}(x)\mathrm{d}x}{\int_{\mathsf{X}}\bar{\pi}^{\gamma}(x)\mathrm{d}x},\quad A\in\mathcal{B}(\mathsf{X}).\end{eqnarray*}

\item Let $q\in\mathcal{P}(\mathsf{X})$ be an increment distribution admitting
a density with respect to Lebesgue measure, also denoted by $q$.
For each $n\geq1$ and $1\leq k\leq n$, let $M_{n,k}$ be a random
walk Metropolis (RWM) kernel of invariant distribution $\pi_{\gamma(k/n)}$
and proposal kernel $q$ , i.e.\begin{eqnarray*}
M_{n,k}(x,A) & = & \int_{A-x}\left[1\wedge\dfrac{\bar{\pi}^{\gamma(k/n)}(x+y)}{\bar{\pi}^{\gamma(k/n)}(x)}\right]q(y)\lambda^{\text{Leb}}(dy)\\
 &  & +\delta_{x}(A)\int_{\mathsf{X}-x}\left[1-\left[1\wedge\dfrac{\bar{\pi}^{\gamma(k/n)}(x+y)}{\bar{\pi}^{\gamma(k/n)}(x)}\right]\right]q(y)\lambda^{\text{Leb}}(dy),\quad A\in\mathcal{B}(\mathsf{X}).\end{eqnarray*}
where for any set $C$, $C-x:=\left\{ z\in\mathsf{X};\; z+x\in C\right\} $
(note that in applications it may be of interest to allow $q$ to
depend on $n$ and $k$, for example via $\gamma(k/n)$, but for simplicity
this issue is not pursued further here).
\item Let $\left\{ G_{n,k};n\geq1,0\leq k\leq n-1\right\} $ be a collection
of potential functions defined by\begin{eqnarray*}
G_{n,k}(x) & = & \exp\left[\frac{1}{n}\log\bar{\pi}(x)\left(\frac{\gamma((k+1)/n)-\gamma(k/n)}{1/n}\right)\right].\end{eqnarray*}

\end{itemize}
Consider the following assumptions on the target density$\pi$ and
increment density $q$.
\begin{itemize}
\item The density $\pi$ is strictly positive, bounded and has continuous
first derivatives such that\begin{eqnarray}
\lim_{r\rightarrow\infty}\sup_{\left|x\right|\geq r}n(x)\cdot\nabla\log\pi(x)=-\infty, &  & \lim_{r\rightarrow\infty}\sup_{\left|x\right|\geq r}n(x)\cdot\dfrac{\nabla\pi(x)}{\left|\nabla\pi(x)\right|}<0,\label{eq:pi_super_exponential}\end{eqnarray}

\item For all $r>0$ there exists $\epsilon_{r}>0$ such that \begin{eqnarray}
\left|x\right|\leq r & \Rightarrow & q(x)\geq\epsilon_{r}.\label{eq:increment density}\end{eqnarray}

\end{itemize}
The assumptions of equations (\ref{eq:pi_super_exponential})-(\ref{eq:increment density})
are standard types of assumptions ensuring geometric ergodicity of
RWM kernels \citep{mc:theory:rt96,mc:theory:JH00}. The assumption
of equation (\ref{eq:increment density}) is stronger than the standard
one in \citep{mc:theory:JH00}, but is flexible enough to verify (A\ref{hyp:fosterlyapunov})
which involves a family of minorization measures/constants, indexed
over a range of levels of $V$. 

The interest in the specific FK models of this section arises from
the choice of the initial distribution $\mu$ addressed in the following
lemma. This FK model corresponds to a particular choice of the {}``backwards''
kernels in \citep{smc:meth:DDJ06}, and in the corresponding SMC algorithm
the order of the weighting and resampling steps is reversed. 
\begin{lem}
\label{lem:smc_sampler_unbiased}Consider the operators \textup{$\left\{ \Phi_{n,k}\right\} $}
associated with $\left\{ G_{n,k}\right\} $ and $\left\{ M_{n,k}\right\} $
of section \ref{sec:Application}. Then for all $n\geq1$ and $0\leq k\leq n$,
$\Phi_{n,0:k}\left(\pi_{\underline{\gamma}}\right)=\pi_{\gamma(k/n)}$. \end{lem}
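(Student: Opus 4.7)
The plan is a straightforward induction on $k$. The base case $k=0$ is immediate: by the convention $\Phi_{n,0:0} = \mathrm{Id}$, and since $\gamma(0) = \underline{\gamma}$, we have $\Phi_{n,0:0}(\pi_{\underline{\gamma}}) = \pi_{\underline{\gamma}} = \pi_{\gamma(0/n)}$.

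For the inductive step, I assume $\Phi_{n,0:k}(\pi_{\underline{\gamma}}) = \pi_{\gamma(k/n)}$ and compute
\[
\Phi_{n,0:k+1}(\pi_{\underline{\gamma}}) = \Phi_{n,k+1}\bigl(\pi_{\gamma(k/n)}\bigr) = \Psi_{n,k}\bigl(\pi_{\gamma(k/n)}\bigr)\, M_{n,k+1},
\]
using the semigroup property of $\Phi_{n,0:k+1}$ and the defining identity $\Phi_{n,k+1}(\eta) = \Psi_{n,k}(\eta) M_{n,k+1}$. The heart of the argument is to show that the weighting step maps $\pi_{\gamma(k/n)}$ exactly onto $\pi_{\gamma((k+1)/n)}$. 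Simplifying the potential,
\[
G_{n,k}(x) = \exp\!\left[\log\bar{\pi}(x)\,\bigl(\gamma((k+1)/n) - \gamma(k/n)\bigr)\right] = \bar{\pi}(x)^{\gamma((k+1)/n) - \gamma(k/n)},
\]
so the unnormalised density of $\Psi_{n,k}(\pi_{\gamma(k/n)})$ satisfies
\[
G_{n,k}(x)\,\bar{\pi}^{\gamma(k/n)}(x) = \bar{\pi}^{\gamma((k+1)/n)}(x),
\]
which is precisely (up to a normalising constant) the density of $\pi_{\gamma((k+1)/n)}$. Hence $\Psi_{n,k}(\pi_{\gamma(k/n)}) = \pi_{\gamma((k+1)/n)}$.

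Finally, since $M_{n,k+1}$ was constructed as a RWM kernel with invariant distribution $\pi_{\gamma((k+1)/n)}$, the invariance identity $\pi_{\gamma((k+1)/n)} M_{n,k+1} = \pi_{\gamma((k+1)/n)}$ closes the induction. There is no real obstacle here: the proof is a two-line algebraic check once one observes that the potential $G_{n,k}$ is exactly the ratio of successive unnormalised densities in the tempering ladder, and that each Metropolis kernel has been tuned to preserve the corresponding intermediate target.
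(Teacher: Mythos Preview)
Your proof is correct and follows essentially the same approach as the paper: both argue by induction on $k$, using the convention $\Phi_{n,0:0}=\mathrm{Id}$ for the base case and, for the inductive step, the observation that $G_{n,k}(x)\bar{\pi}^{\gamma(k/n)}(x)=\bar{\pi}^{\gamma((k+1)/n)}(x)$ together with the invariance of $M_{n,k+1}$ with respect to $\pi_{\gamma((k+1)/n)}$. The only cosmetic difference is that you explicitly factor $\Phi_{n,k+1}$ as $\Psi_{n,k}(\cdot)M_{n,k+1}$, whereas the paper writes out the integral in one line.
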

\begin{proof}
Fix $n$ arbitrarily and suppose the result holds at rank $0<k<n$.
Then

\begin{eqnarray*}
\Phi_{n,0:k+1}\left(\pi_{\underline{\gamma}}\right)(A) & = & \Phi_{n,k+1}\circ\Phi_{n,0:k}\left(\pi_{\underline{\gamma}}\right)(A)\\
 & = & \frac{\int\dfrac{\bar{\pi}^{\gamma(\left(k+1\right)/n)}(x)}{\bar{\pi}^{\gamma(k/n)}(x)}M_{n,k+1}(A)(x)\pi_{\gamma(k/n)}(dx)}{\int\dfrac{\bar{\pi}^{\gamma(\left(k+1\right)/n)}(x)}{\bar{\pi}^{\gamma(k/n)}(x)}\pi_{\gamma(k/n)}(dx)}\\
 & = & \pi_{\gamma(\left(k+1\right)/n)}(A),\quad A\in\mathcal{B}(\mathsf{X}),\end{eqnarray*}
due to the property that $M_{n,k}$ is invariant for $\pi_{\gamma(k/n)}(dx)$.
The proof is complete upon noting that for all $n$, $\Phi_{n,0:0}=Id$
by convention.
\end{proof}
We have the following result.
\begin{thm}
\label{thm:app}Consider the collection of FK models specified in
section \ref{sec:Application}. Let $s>0$ be an independent parameter
and set $t=\frac{1+s}{s}$. Let $\alpha\in[0,1]$, and $p\geq1$ be
such that $\alpha pt\leq1$ and $(1+s)p(1-\underline{\gamma})/\underline{\gamma}<1$.
Then there exist finite constants $C_{1}(p,\mu)$, and $C_{2}\left(\mu,\pi_{\underline{\gamma}}\right)$
(depending implicitly on $\pi$ and $\gamma(\cdot)$), and constants
$\beta\in(0,1)$ and $\rho\in[0,1),$ such that for any $f\in\mathcal{L}_{V^{\alpha}}$,
$n\geq1$ and $N\geq1$,\begin{eqnarray*}
\bar{\mathbb{E}}_{\mu}\left[\left|\left(\pi_{n}^{N}-\pi\right)(f)\right|^{p}\right]^{1/p} & \leq & \left\Vert f\right\Vert _{V^{\alpha}}\left(\frac{C_{1}(p,\mu)}{\sqrt{N}}+\rho^{n}C_{2}\left(\mu,\pi_{\underline{\gamma}}\right)\mathbb{I}\left[\mu\neq\pi_{\underline{\gamma}}\right]\right),\end{eqnarray*}
 where $V(x)\propto\pi^{-\beta\underline{\gamma}}(x)$ and for each
$n$, $\pi_{n}^{N}:=\eta_{n,n}^{N}$.
\end{thm}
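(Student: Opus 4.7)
The plan is to split the total error via the triangle inequality as
\[
(\pi_n^N - \pi)(f) = (\eta_{n,n}^N - \eta_{n,n})(f) + (\eta_{n,n} - \pi)(f),
\]
and then to bound the first, stochastic term by Theorem \ref{thm:L_p_bound} and the second, deterministic term by Theorem \ref{thm:v_norm_bound}. The second term is exactly where Lemma \ref{lem:smc_sampler_unbiased} enters: since $\gamma(n/n) = 1$ gives $\pi = \pi_{\gamma(n/n)} = \Phi_{n,0:n}(\pi_{\underline\gamma})$, we can write $\eta_{n,n}(f) - \pi(f) = \Phi_{n,0:n}(\mu)(f) - \Phi_{n,0:n}(\pi_{\underline\gamma})(f)$, which vanishes when $\mu = \pi_{\underline\gamma}$ and otherwise is handled by the stability bound \eqref{eq:theorem_v_norma_bounds_Phi_state}. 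The remaining work is to verify the hypotheses of the two theorems in this random walk Metropolis tempering setup.

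Verification of (A\ref{hyp:G_bounded-1}) is straightforward: after rescaling $\bar\pi$ so that $\sup_x\bar\pi(x) = 1$, we have $G_{n,k}(x) = \bar\pi(x)^{\gamma((k+1)/n) - \gamma(k/n)} \leq 1$, so one may take $\bar G_{\mathsf X} = 1$ and $\widetilde G_{n,k} = G_{n,k}$. For (A\ref{hyp:fosterlyapunov}) with $V = c\,\pi^{-\beta\underline\gamma}$ for suitably small $\beta \in (0,1)$, I would appeal to Jarner--Hansen-type theory for random walk Metropolis kernels: invoked with exponent $\beta\underline\gamma/\gamma(k/n) \in (0,1)$ for the $k$-th kernel, whose invariant is $\pi_{\gamma(k/n)}$, it produces precisely $V_{\gamma(k/n)} = \pi_{\gamma(k/n)}^{-\beta\underline\gamma/\gamma(k/n)} \propto \bar\pi^{-\beta\underline\gamma}$ as the Lyapunov function, so $V$ is a common drift function for the whole family $\{M_{n,k}\}$. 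The super-exponential tail hypothesis \eqref{eq:pi_super_exponential} passes to each member of the family, and the uniform minorization across every sub-level set $\mathsf C_d$ comes from \eqref{eq:increment density} together with the uniform compact-set lower bound on RWM acceptance ratios. For (A\ref{hyp:norm_const_bound_below}) I would apply Lemma \ref{lem:norm_const_bounded_below}: the Lipschitz property of $\gamma$ gives $U_{n,k}(x) = -n(\gamma((k+1)/n)-\gamma(k/n))\log\bar\pi(x) \leq L(-\log\bar\pi(x))$, and since $V \propto \bar\pi^{-\beta\underline\gamma}$ the ratio $U_{n,k}/V$ is dominated up to constants by $u \mapsto u e^{-\beta\underline\gamma u}$ on $[0,\infty)$, hence uniformly bounded.

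The particle moment hypotheses of Theorem \ref{thm:L_p_bound} remain. Condition \eqref{eq:E^N[eta(V)]_bounded} reduces, via a propagation-of-moments argument for $\bar{\mathbb{E}}_\mu[\eta_{n,k}^N(V^{\alpha t p})]$ along the lines developed in the appendix, to iteration of the single-kernel drift; the restriction $\alpha t p \leq 1$ keeps the exponent within the scale of $V$ on which the drift is available. Condition \eqref{eq:E^N[eta(Q)]_bounded_below} is more delicate: because $\widetilde Q_{n,k:n}(1)$ depends multiplicatively on the sequence of $\widetilde G_{n,\ell}$ whose tail decay is controlled by $\bar\pi^{\gamma_{\ell+1}-\gamma_\ell}$, its negative moments couple to the tail behaviour of $V$, and the assumption $(1+s)p(1-\underline\gamma)/\underline\gamma < 1$ is precisely what is needed so that $\bar{\mathbb{E}}_\mu[\eta_{n,k}^N(\widetilde Q_{n,k:n}(1))^{-(1+s)p}]$ is dominated uniformly in $n, k, N$ by the drift moments available for the mean particle measure.

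Having verified these hypotheses, Theorem \ref{thm:L_p_bound} delivers the stochastic term $\leq C_1(p,\mu)\|f\|_{V^\alpha}/\sqrt N$ via $C\|f\|_{V^\alpha}(1-\rho^n)/(1-\rho) \cdot N^{-1/2}$, while Theorem \ref{thm:v_norm_bound} applied at $k=0$ with $\mu' = \pi_{\underline\gamma}$ gives the bias term $\rho^n C_2(\mu,\pi_{\underline\gamma})$: the ratios $\mu(\widetilde G_{n,0}V^\alpha)/\mu(\widetilde Q_{n,0:n}(1))$ and $\pi_{\underline\gamma}(\widetilde G_{n,0}V^\alpha)/\pi_{\underline\gamma}(\widetilde Q_{n,0:n}(1))$ are bounded uniformly in $n$ using $\mu(V), \pi_{\underline\gamma}(V) < \infty$ (the latter for $\beta$ small enough, since $\pi_{\underline\gamma}(V) \propto \int \bar\pi^{(1-\beta)\underline\gamma}$ is finite by super-exponential decay) together with Lemma \ref{lem:norm_const_bounded_below}; the indicator $\mathbb{I}[\mu \neq \pi_{\underline\gamma}]$ arises from Lemma \ref{lem:smc_sampler_unbiased}. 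The main obstacle I anticipate is establishing the uniform Foster--Lyapunov drift (A\ref{hyp:fosterlyapunov}) with a single $V$ and a uniform family of minorizations across every sub-level set valid for the whole family $\{M_{n,k}\}$ of kernels with varying invariants; the second significant difficulty is the negative-moment condition \eqref{eq:E^N[eta(Q)]_bounded_below}, which is the root of the parameter trade-off between $p$, $s$ and $\underline\gamma$ appearing in the theorem statement.
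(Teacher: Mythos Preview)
Your proposal is correct and follows essentially the same approach as the paper: the same triangle-inequality split into a stochastic term and a deterministic bias, the bias handled via Lemma~\ref{lem:smc_sampler_unbiased} together with Theorem~\ref{thm:v_norm_bound}, and the stochastic term via Theorem~\ref{thm:L_p_bound} after verifying (A\ref{hyp:G_bounded-1}), (A\ref{hyp:fosterlyapunov}), (A\ref{hyp:norm_const_bound_below}) and the two particle-moment hypotheses in the way you outline. The one place where the paper is more explicit than your sketch is the particle drift \eqref{eq:E^N[eta(V)]_bounded}: it uses that $G_{n,k}$ and $V$ are both functions of $\bar\pi$ with opposite monotonicity, so Lemma~\ref{lem:eta_f_g} with $\delta=0$ gives $\eta(G_{n,k}V)\leq\eta(G_{n,k})\eta(V)$ for every $\eta$, and the one-step conditional particle drift then iterates directly.
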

The proof of Theorem \ref{thm:app} is postponed until after the following
proposition regarding the verification of assumptions. 
\begin{prop}
\label{pro:app_verify_assump}Consider the setting of section \ref{sec:Application}.
Then (A\ref{hyp:G_bounded-1}), (A\ref{hyp:fosterlyapunov}) and (A\ref{hyp:norm_const_bound_below})
hold.\end{prop}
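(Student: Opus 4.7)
The plan is to verify the three assumptions separately, with the bulk of the work concentrated on (A\ref{hyp:fosterlyapunov}), which is where the specific structure of the random walk Metropolis (RWM) kernels and the super-exponential tail assumption on $\pi$ must be combined.

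For (A\ref{hyp:G_bounded-1}) I would simply compute $G_{n,k}(x)=\bar\pi(x)^{\gamma((k+1)/n)-\gamma(k/n)}$, observe that the Lipschitz assumption on $\gamma$ forces the exponent into $[0,L/n]$, and then use boundedness of $\bar\pi$ to obtain a uniform upper bound $\overline G_{\mathsf X}=\max(1,\|\bar\pi\|_\infty)^{L}$. This also pins down $U_{n,k}(x) = -n\log\widetilde G_{n,k}(x)$ explicitly, which will be needed for (A\ref{hyp:norm_const_bound_below}).

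For (A\ref{hyp:fosterlyapunov}) the candidate is the function $V(x)\propto \pi^{-\beta\underline\gamma}(x)$ appearing in the statement of Theorem~\ref{thm:app}, for $\beta\in(0,1)$ to be fixed small enough. The key observation is that each $\pi_{\gamma(k/n)}$ inherits the super-exponential tail and regular-contour properties (\ref{eq:pi_super_exponential}) from $\pi$, because these conditions are preserved under taking positive powers of the density (both $n(x)\cdot\nabla\log\pi^{\gamma}=\gamma\,n(x)\cdot\nabla\log\pi$ and the direction of $\nabla\pi^\gamma$ are unchanged). The Jarner--Hansen drift theorem for RWM kernels therefore applies kernel-by-kernel with drift function $\pi_{\gamma(k/n)}^{-s}$ for any $s\in(0,1)$. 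Writing $V=\pi^{-\beta\underline\gamma} = \pi_{\gamma(k/n)}^{-\beta\underline\gamma/\gamma(k/n)} \cdot(\text{const})$ with $\beta\underline\gamma/\gamma(k/n)\in(0,\beta]\subset(0,1)$ uniformly in $k,n$, I would argue that the drift inequality $M_{n,k}V\leq\lambda V+b_d\mathbb I_{\mathsf C_d}$ holds with constants that can be chosen independently of $n$ and $k$; the compact-sublevel-set structure $\mathsf C_d=\{V\leq d\}$ is exactly the form Jarner--Hansen produce. For the family of minorization conditions, I would invoke assumption (\ref{eq:increment density}), which guarantees that $q$ is bounded below on every ball, and combine it with the fact that the acceptance ratio for RWM targeting $\pi_{\gamma(k/n)}$ is bounded below on any sublevel set $\mathsf C_d$ uniformly in $\gamma\in[\underline\gamma,1]$ (since $\pi_{\gamma}$ is continuous, strictly positive, bounded, and $\gamma$ lives in a compact interval). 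This produces the required $\varepsilon_d$ and $\nu_d$. The delicate point, and the one I expect to consume the most bookkeeping, is making all constants truly uniform in $(n,k)$: the minorizing measure $\nu_d$ needs to integrate $V$ (which is fine if $\nu_d$ is taken proportional to Lebesgue restricted to a ball) and the drift constants must not blow up as $\gamma(k/n)\to\underline\gamma$ or $\to 1$.

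Finally, for (A\ref{hyp:norm_const_bound_below}) I would appeal directly to Lemma~\ref{lem:norm_const_bounded_below}. It only remains to check its hypothesis $\sup_{n,k,x}U_{n,k}(x)/V(x)<\infty$. Using the expression from step~1, $U_{n,k}(x)\leq L|\log\bar\pi(x)|+C$, while $V(x)=\pi^{-\beta\underline\gamma}(x)$ grows exponentially in $-\log\pi(x)$ as $|x|\to\infty$; hence the ratio is bounded on all of $\mathbb R^d$, and the lemma produces the required uniform lower bound on $\mu(\widetilde Q_{n,k:n}(1))$ for every $\mu$ with $\mu(V)<\infty$. The hardest conceptual step, as noted above, will be the uniform-in-$(n,k)$ drift in (A\ref{hyp:fosterlyapunov}); everything else is essentially a matching-of-constants argument.
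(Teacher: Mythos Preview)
Your treatment of (A\ref{hyp:G_bounded-1}) and (A\ref{hyp:norm_const_bound_below}) matches the paper's exactly: boundedness of $\bar\pi$ plus the Lipschitz property of $\gamma$ for the former, and Lemma~\ref{lem:norm_const_bounded_below} together with $U_{n,k}(x)=O(|\log\bar\pi(x)|)\ll V(x)$ for the latter.

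The substantive difference is in how you handle the drift part of (A\ref{hyp:fosterlyapunov}). You propose to run the Jarner--Hansen argument kernel-by-kernel, rewriting $V$ as $\pi_{\gamma(k/n)}^{-\beta\underline\gamma/\gamma(k/n)}$, and then argue that the resulting drift constants can be taken uniform in $(n,k)$ because the exponent stays in a compact subinterval of $(0,1)$. This is viable, but the uniformity step you flag as ``delicate'' genuinely requires re-entering the Jarner--Hansen proof with $\gamma$ as a parameter and tracking how $\lambda$ and the cut-off radius depend on it. The paper sidesteps this entirely by a pointwise domination: it verifies drift once for the single kernel $M_0$ targeting $\pi_{\underline\gamma}$, and then invokes the monotonicity
\[
M_{n,k}V(x)\ \le\ M_0 V(x)\qquad\text{for all }x,
\]
which holds because the acceptance ratio $1\wedge(\pi(x+y)/\pi(x))^{\gamma(k/n)}$ is dominated by $1\wedge(\pi(x+y)/\pi(x))^{\underline\gamma}$ on the rejection region $\{\pi(x+y)<\pi(x)\}$, and $V$ is a decreasing function of $\pi$. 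An analogous domination in the other direction, $1\wedge r^{\gamma}\ge 1\wedge r$ for $\gamma\le 1$, is used for the minorization. Your route would eventually work, but the paper's domination trick turns the hardest step of your plan into a one-line reduction to a single fixed kernel; you may want to adopt it.
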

\begin{proof}
As the density $\pi$ is bounded and $\gamma(\cdot)$ is Lipschitz,
(A\ref{hyp:G_bounded-1}) holds by the mean value theorem.  We now
turn to the verification of (A\ref{hyp:fosterlyapunov}). Various
arguments are adopted from \citet{mcmc:the:ABD01} and the manipulations
are fairly standard, but are included here for completeness. The main
difference is that we need to explicitly verify the drift and minorization
conditions of (A\ref{hyp:fosterlyapunov}) which hold over a range
of sub-levels for $V$ and the proof below involves verification of
some assumptions taken as given in \citep[Lemma 4]{mcmc:the:ABD01}.

Firstly, due to the definition of $\pi_{\underline{\gamma}}$ we have
that $\nabla\log\pi_{\underline{\gamma}}(x)=\underline{\gamma}\nabla\log\bar{\pi}(x)=\underline{\gamma}\nabla\log\pi(x)$
and $\dfrac{\nabla\pi(x)}{\left|\nabla\pi(x)\right|}=\dfrac{\bar{\pi}(x)\nabla\log\bar{\pi}(x)}{\left|\bar{\pi}(x)\nabla\log\bar{\pi}(x)\right|}\dfrac{\underline{\gamma}}{\underline{\gamma}}\dfrac{\bar{\pi}^{\underline{\gamma}-1}(x)}{\bar{\pi}^{\underline{\gamma}-1}(x)}=\dfrac{\nabla\pi_{\underline{\gamma}}(x)}{\left|\nabla\pi_{\underline{\gamma}}(x)\right|}$
and so\begin{eqnarray}
\lim_{r\rightarrow\infty}\sup_{\left|x\right|\geq r}n(x)\cdot\nabla\log\pi_{\underline{\gamma}}(x)=-\infty, &  & \lim_{r\rightarrow\infty}\sup_{\left|x\right|\geq r}n(x)\cdot\dfrac{\nabla\pi_{\underline{\gamma}}(x)}{\left|\nabla\pi_{\underline{\gamma}}(x)\right|}<0\label{eq:pi_gam_super_exponential-1}\end{eqnarray}

In order to verify (A\ref{hyp:fosterlyapunov}) we first verify a
drift condition for $M_{0}(x,dy)$, defined to be the RWM kernel reversible
w.r.t. $\pi_{\underline{\gamma}}(x)$ with increment density $q$
as above, i.e.\begin{eqnarray*}
M_{0}(x,A) & = & \int_{A-x}\left[1\wedge\dfrac{\bar{\pi}^{\underline{\gamma}}(x+y)}{\bar{\pi}^{\underline{\gamma}}(x)}\right]q(y)\lambda^{\text{Leb}}(dy)\\
 &  & +\delta_{x}(A)\int_{\mathsf{X}-x}\left[1-\left[1\wedge\dfrac{\bar{\pi}^{\underline{\gamma}}(x+y)}{\bar{\pi}^{\underline{\gamma}}(x)}\right]\right]q(y)\lambda^{\text{Leb}}(dy),\quad A\in\mathcal{B}(\mathsf{X}).\end{eqnarray*}
Let $\beta\in(0,1)$ and define $V:\mathsf{X}\rightarrow[1,+\infty)$
by\begin{eqnarray}
V(x) & := & \dfrac{\pi^{-\underline{\gamma}\beta}(x)}{\inf_{x}\pi^{-\underline{\gamma}\beta}(x)}.\label{eq:app_defn_of_V}\end{eqnarray}
The results of \citet{mc:theory:JH00} show that when (\ref{eq:pi_gam_super_exponential-1})
holds, then for $M_{0}$ with increment density $q$ satisfying equation
(\ref{eq:increment density}), it holds that $\lim_{r\rightarrow\infty}\sup_{\left|x\right|\geq r}\dfrac{M_{0}V(x)}{V(x)}<1$.
Thus there exist $\lambda<1$ and $\rho_{\lambda}<+\infty$ such that\begin{equation}
\left|x\right|\geq\rho_{\lambda}\Rightarrow\dfrac{M_{0}V(x)}{V(x)}\leq\lambda.\label{eq:MV_V_lambda}\end{equation}
Due to (\ref{eq:pi_super_exponential}), there exists $\epsilon>0$
and $\rho_{\epsilon}>0$ such that \begin{eqnarray*}
\left|x\right|\geq\rho_{\epsilon}\Rightarrow n(x)\cdot\nabla\log\pi_{\underline{\gamma}}(x) & \leq & -\epsilon.\end{eqnarray*}
Now set $r_{0}=\rho_{\lambda}\vee\rho_{\epsilon}$ and $d_{0}:=\sup_{\left|x\right|\leq r_{0}}V(x)$.
Note that $d_{0}<+\infty$ due to the definition of $V$ and as the
density $\pi$ is continuous and positive. We now proceed to verify
the drift part of (A\ref{hyp:fosterlyapunov}).

For any $d\geq d_{0}$ let $\mathsf{C}_{d}:=\left\{ x:V(x)\leq d\right\} $.
We then have \begin{eqnarray}
\sup_{x\in\mathsf{C}_{d}}M_{0}V(x) & \leq d\sup_{x\in\mathsf{C}_{d}}\dfrac{M_{0}V(x)}{V(x)} & =d\sup_{x\in\mathsf{C}_{d}}\left\{ \int_{\mathsf{A}(x)}\frac{\bar{\pi}^{-\underline{\gamma}\beta}(x+y)}{\bar{\pi}^{-\underline{\gamma}\beta}(x)}q(y)\lambda^{\text{Leb}}(dy)\right.\nonumber \\
 &  & \left.\;\;+\int_{\mathsf{R}(x)}\left[1-\frac{\bar{\pi}^{\underline{\gamma}}(x+y)}{\bar{\pi}^{\underline{\gamma}}(x)}+\frac{\bar{\pi}^{\underline{\gamma}\left(1-\beta\right)}(x+y)}{\bar{\pi}^{\underline{\gamma}\left(1-\beta\right)}(x)}\right]q(y)\lambda^{\text{Leb}}(dy)\right\} \label{eq:MV_bound_R^d}\end{eqnarray}
where $\mathsf{A}(x):=\left\{ y\in\mathsf{X}:\pi(x+y)\geq\pi(x)\right\} $,
$\mathsf{R}(x):=\left\{ y\in\mathsf{X}:\pi(x+y)<\pi(x)\right\} $.
As each of the ratios in (\ref{eq:MV_bound_R^d}) is less than or
equal to $1$, then we conclude that there exists a constant $C_{b}<+\infty$
such that \begin{eqnarray}
\sup_{x\in\mathsf{C}_{d}}M_{0}V(x) & \leq & dC_{b}=:b_{d}.\label{eq:MV_bound_R^d_1}\end{eqnarray}
Noting the definition of $d_{0}$, and combining (\ref{eq:MV_V_lambda})
and (\ref{eq:MV_bound_R^d_1}) we obtain\begin{eqnarray}
M_{0}V(x) & = & M_{0}V(x)\mathbb{I}[\left|x\right|>r_{0}]+M_{0}V(x)\mathbb{I}[\left|x\right|\leq r_{0}]\nonumber \\
 & \leq & \lambda V(x)+M_{0}V(x)\mathbb{I}[V(x)\leq d]\nonumber \\
 & \leq & \lambda V(x)+b_{d}\mathbb{I}_{\mathsf{C}_{d}}(x),\label{eq:M_0_drift}\end{eqnarray}
for any $d\geq d_{0}$ and $x\in\mathsf{X}$. The arguments of \citet[Lemma 5]{mcmc:the:ABD01}
then give $M_{n,k}V(x)\leq M_{0}V(x)$ and from this, (\ref{eq:M_0_drift})
and (\ref{eq:MV_bound_R^d_1}), we obtain for any $d\geq d_{0}$,
\begin{eqnarray*}
\sup_{n\geq1}\sup_{1\leq k\leq n}M_{n,k}V(x) & \leq & \lambda V(x)+b_{d}\mathbb{I}_{\mathsf{C}_{d}}(x),\end{eqnarray*}
which establishes the drift part of (A\ref{hyp:fosterlyapunov}).
It remains to show the minorization part. To this end we first show
that for any $d\geq d_{0}$, $\mathsf{C}_{d}$ is bounded. Recalling
the definition of $r_{0}$, we have that for any $x$ such that $\left|x\right|-r_{0}\geq0$,\begin{eqnarray*}
\frac{V(x)}{V(n(x)r_{0})} & = & \left(\frac{\pi(x)}{\pi(n(x)r_{0})}\right)^{-\underline{\gamma}\beta}\\
 & = & \exp\left[-\underline{\gamma}\beta\left(\left|x\right|-r_{0}\right)\int_{0}^{1}n(x)\cdot\nabla\log\pi(tx+(1-t)n(x)r_{0})dt\right]\\
 & \geq & \exp\left[\underline{\gamma}\beta\left(\left|x\right|-r_{0}\right)\epsilon\right]\end{eqnarray*}
from which we see that $\lim_{r\rightarrow\infty}\inf_{\left|x\right|\geq r}V(x)=+\infty$,
which in turn implies that for all $d\geq d_{0}$ there exists $r_{d}\geq0$
such that $V(x)\leq d\Rightarrow\left|x\right|\leq r_{d}.$ Then for
any $n\geq1$, $0\leq k\leq n$ and $r_{d}\geq0$, whenever $x\in\mathsf{C}_{d}$

\begin{eqnarray*}
M_{n,k}(x,A) & \geq & \int_{A-x}\left[1\wedge\frac{\pi(x+y)}{\pi(x)}\right]q(y)\lambda^{\text{Leb }}(dy)\\
 & \geq & \int_{\left(A\cap B(0,r_{d})-x\right)}\left[1\wedge\frac{\pi(x+y)}{\pi(x)}\right]q(y)\lambda^{\text{Leb }}(dy)\\
 & \geq & \epsilon_{2r_{d}}\frac{\inf_{y\in B(0,3r_{d})}\pi(y)}{\sup_{y\in B(0,3r_{d})}\pi(y)}\int_{\left(A\cap B(0,r_{d})-x\right)}\lambda^{\text{Leb }}(dy)\\
 & = & \epsilon_{2r_{d}}\frac{\inf_{y\in B(0,3r_{d})}\pi(y)}{\sup_{y\in B(0,3r_{d})}\pi(y)}\int_{B(0,r_{d})}\lambda^{\text{Leb }}(dy)\dfrac{\int_{A\cap B(0,r_{d})}\lambda^{\text{Leb }}(dy)}{\int_{B(0,r_{d})}\lambda^{\text{Leb }}(dy)}\\
 & =: & \varepsilon_{d}\cdot\nu_{d}(A),\end{eqnarray*}
where the third inequality holds due to the properties of $q$ in
(\ref{eq:increment density}) and because $A\cap B(0,r_{d})\cap B(x,2r_{d})=A\cap B(0,r_{d})$
whenever $x\in B(0,r_{d})$. As $\pi$ is strictly positive and continuous,
$V$ is bounded on compact sets and therefore $\nu_{d}(V)<+\infty$.
Also, $\mathsf{C}_{d}\supseteq\mathsf{C}_{d_{0}}$ and then due to
the definition of $d_{0}$, $\nu_{d}(\mathsf{C}_{d})\geq\nu_{d}(\mathsf{C}_{d_{0}})\geq\nu_{d}(B(0,r_{0}))>0$.
This concludes the verification of (A\ref{hyp:fosterlyapunov}).

For (A\ref{hyp:norm_const_bound_below}), from the definition of $G_{n,k}$,
we observe that $U_{n,k}$ is defined by \begin{eqnarray*}
U_{n,k}(x) & = & \log\bar{\pi}(x)\left(\frac{\gamma(\left(k+1\right)/n)-\gamma(k/n)}{1/n}\right)-C_{\gamma}\sup_{y}\log\bar{\pi}(y)\end{eqnarray*}
where $C_{\gamma}$ is the Lipschitz constant for $\gamma(\cdot)$
and we observe that $\sup_{n\geq1}\sup_{0\leq k\leq n-1}\sup_{x\in\mathsf{X}}\dfrac{U_{n,k}(x)}{V(x)}<+\infty$
. Assumption (A\ref{hyp:norm_const_bound_below}) is then satisfied
upon application of Lemma \ref{lem:norm_const_bounded_below}. 
\end{proof}

\begin{proof}
\emph{(Theorem \ref{thm:app})} 

Throughout the proof, we denote by $C$ a constant whose value may
change upon each appearance. Consider the error decomposition\begin{eqnarray}
\bar{\mathbb{E}}_{\mu}\left[\left|\left(\pi_{n}^{N}-\pi\right)(f)\right|^{p}\right]^{1/p} & = & \bar{\mathbb{E}}_{\mu}\left[\left|\left(\eta_{n,n}^{N}-\eta_{n,n}\right)(f)\right|^{p}\right]^{1/p}+\left|\left(\eta_{n,n}-\pi\right)(f)\right|.\label{eq:app_err_decomp}\end{eqnarray}

Choose $\beta$ such that $(1+s)p(1-\underline{\gamma})/\left(\underline{\gamma}\beta\right)\leq1$
and take $V$ to be defined as in equation (\ref{eq:app_defn_of_V}).
By proposition \ref{pro:app_verify_assump}, the FK model of section
\ref{sec:Application} satisfies assumptions (A\ref{hyp:G_bounded-1}),
(A\ref{hyp:fosterlyapunov}) and (A\ref{hyp:norm_const_bound_below}).
The second term on the r.h.s. of (\ref{eq:app_err_decomp}) is treated
by application of Theorem \ref{thm:v_norm_bound}. Noting that by
definition, $\eta_{n,n}=\Phi_{n,0:n}\left(\mu\right)$ and by Lemma
\ref{lem:smc_sampler_unbiased}, $\pi=\Phi_{n,0:n}\left(\pi_{\underline{\gamma}}\right)$
we obtain from Theorem \ref{thm:v_norm_bound} that there exist constants
$\rho$, $M$ and $C_{2}$ such that\begin{eqnarray*}
\left\Vert \eta_{n,n}-\pi\right\Vert _{V^{\alpha}} & \leq & M\rho^{n}\left[\frac{\mu\left(\widetilde{G}_{n,k}V^{\alpha}\right)}{\mu\left(\widetilde{Q}_{n,k:n}(1)\right)}+\frac{\pi_{\underline{\gamma}}\left(\widetilde{G}_{n,k}V^{\alpha}\right)}{\pi_{\underline{\gamma}}\left(\widetilde{Q}_{n,k:n}(1)\right)}\right]\mathbb{I}\left[\mu\neq\pi_{\underline{\gamma}}\right]\\
 & \leq & \rho^{n}C_{2}\left(\mu,\pi_{\underline{\gamma}}\right)\mathbb{I}\left[\mu\neq\pi_{\underline{\gamma}}\right].\end{eqnarray*}
where the constant $C_{2}$ arises from assumption (A\ref{hyp:norm_const_bound_below})
applied to the denominator terms and implicitly depends on $\pi_{\underline{\gamma}}$,
$V$, and the constants in (A\ref{hyp:fosterlyapunov}). 

In order to apply Theorem \ref{thm:L_p_bound} to the first term on
the r.h.s. of (\ref{eq:app_err_decomp}) it remains to verify the
assumptions of equations (\ref{eq:E^N[eta(Q)]_bounded_below})-(\ref{eq:E^N[eta(V)]_bounded}).
We start by addressing the latter. From the definition of $V$ and
due to the assumption that $\gamma(\cdot)$ is non-decreasing we observe
that for all $x,x'\in\mathsf{X}$, \begin{eqnarray*}
\left[G_{n,k}(x)-G_{n,k}(x')\right]\left[V(x)-V(x')\right] & \leq & 0\end{eqnarray*}
and therefore by Lemma \ref{lem:eta_f_g} for all (possibly random)
$\eta\in\mathcal{P}\left(\mathsf{X}\right)$, $\eta\left(G_{n,k}V\right)\leq\eta\left(G_{n,k}\right)\eta\left(V\right)$.
Then for any $n\geq1$ and $1\leq k\leq n$.\begin{eqnarray}
\bar{\mathbb{E}}_{\mu}\left[\eta_{n,k}^{N}(V)\left|\mathcal{F}_{n,k-1}^{(N,N)}\right.\right]\leq & \lambda\dfrac{\eta_{n,k-1}^{N}\left(G_{n,k-1}V\right)}{\eta_{n,k-1}^{N}\left(G_{n,k-1}\right)}+b_{d_{0}} & \leq\lambda\eta_{n,k-1}^{N}\left(V\right)+b_{d_{0}},\label{eq:app_particle_drift}\end{eqnarray}
where (A\ref{hyp:fosterlyapunov}) has been applied with $d_{0}$
is defined below equation (\ref{eq:MV_V_lambda}) in the proof of
proposition \ref{pro:app_verify_assump}. Standard iteration of the
particle drift inequality (\ref{eq:app_particle_drift}) (details
omitted for brevity) combined with the fact that $\left\{ \xi_{n,0}^{(N,i)};i=1,...,N\right\} $
are are independent and each distributed according to $\mu$ shows
that\begin{eqnarray}
\sup_{N\geq1}\sup_{n\geq1}\sup_{1\leq k\leq n}\bar{\mathbb{E}}_{\mu}\left[\eta_{n,k}^{N}\left(V\right)\right] & < & +\infty,\label{eq:app_verified_E_eta(V)_bounded}\end{eqnarray}
 and noting that $\alpha pt\leq1$, equation (\ref{eq:E^N[eta(V)]_bounded})
then holds by two applications of Jensen's inequality .

We now turn to the verification of equation (\ref{eq:E^N[eta(Q)]_bounded_below}).
From previous considerations we notice that for some finite constant
$C$,

\[
U_{n,k}(x)=\frac{1}{\beta\underline{\gamma}}\left(\frac{\gamma(\left(k+1\right)/n)-\gamma(k/n)}{1/n}\right)\log V(x)+C\]
 and therefore for any 

\begin{eqnarray*}
 &  & \eta_{n,k}^{N}\left(\widetilde{Q}_{n,k:n}(1)\right)^{-1}\\
 &  & =\left[\int\exp\left(-\frac{1}{n}\sum_{j=k}^{n-1}U_{n,j}(x_{j})\right)\eta_{n,k}^{N}(dx_{k})\prod_{j=k+1}^{n}M_{n,j}(x_{j-1,}dx_{j})\right]^{-1}\\
 &  & \leq\exp\left(\frac{1}{n}\sum_{j=k}^{n-1}\int U_{n,j}(x_{j})\eta_{n,k}^{N}M_{n,k:j}(dx_{j})\right)\\
 &  & =\exp\left(C+\frac{1}{n}\frac{1}{\beta\underline{\gamma}}\sum_{j=k}^{n-1}\left(\frac{\gamma(\left(j+1\right)/n)-\gamma(j/n)}{1/n}\right)\int\log V(x_{j})\eta_{n,k}^{N}M_{n,k:j}(dx_{j})\right)\\
 &  & \leq\exp\left(C+\frac{1}{n}\frac{1}{\beta\underline{\gamma}}\sum_{j=k}^{n-1}\left(\frac{\gamma(\left(j+1\right)/n)-\gamma(j/n)}{1/n}\right)\log\left[\int V(x_{j})\eta_{n,k}^{N}M_{n,k:j}(dx_{j})\right]\right)\\
 &  & \leq\exp\left(C+\frac{1-\gamma(k/n)}{\beta\underline{\gamma}}\log\left[\eta_{n,k}^{N}\left(V\right)\right]\right)\\
 &  & \leq\exp(C)\left[\eta_{n,k}^{N}\left(V\right)\right]^{\frac{1-\underline{\gamma}}{\beta\underline{\gamma}}}\end{eqnarray*}
where the penultimate inequality hold due to standard iteration of
the drift inequality in (A\ref{hyp:fosterlyapunov}). Therefore\begin{eqnarray*}
\bar{\mathbb{E}}_{\mu}\left[\eta_{n,k}^{N}\left(\widetilde{Q}_{n,k:n}(1)\right)^{-(1+s)p}\right] & \leq & C\bar{\mathbb{E}}_{\mu}\left[\eta_{n,k}^{N}\left(V\right)\right]^{q}\end{eqnarray*}
due to Jensen's inequality and where $q:=\frac{\left(1-\underline{\gamma}\right)}{\beta\underline{\gamma}}(1+s)p\leq1$
by assumption of the theorem. Equation (\ref{eq:E^N[eta(Q)]_bounded_below})
then follows upon combining this with equation (\ref{eq:app_verified_E_eta(V)_bounded}).
This completes the proof.
\end{proof}

\section*{Acknowledgments}

The author thanks Christophe Andrieu for discussions which lead to
the consideration of this work.

\section{Appendix}

When (A\ref{hyp:fosterlyapunov}) or even more simply (A\ref{hyp:fosterlyapunov_prelim})
holds, it is natural to ask under what further conditions, if any,
does the non-homogeneous Markov chain $\left\{ \zeta_{n,k}^{(N)};0\leq k\leq n\right\} $
also satisfy a geometric drift condition, as this would be one natural
route to verifying (\ref{eq:E^N[eta(V)]_bounded}). Defining $V_{N}:\mathsf{X}^{N}\rightarrow[1,\infty)$
by

\begin{eqnarray*}
V_{N}(\zeta) & := & \frac{1}{N}\sum_{i=1}^{N}V\left(\xi^{i}\right),\end{eqnarray*}
where $V$ is as in (A\ref{hyp:fosterlyapunov_prelim}) and $\zeta=\left(\xi^{1},...,\xi^{N}\right)$,
we see that

\begin{eqnarray}
\bar{\mathbb{E}}_{\mu}\left[\left.V_{N}(\zeta_{n,k}^{N})\right|\mathcal{F}_{n,k-1}^{\left(N,N\right)}\right] & \leq & \lambda\frac{\eta_{n,k-1}^{N}\left(\widetilde{G}_{n,k-1}V\right)}{\eta_{n,k-1}^{N}\left(\widetilde{G}_{n,k-1}\right)}+b\frac{\eta_{n,k-1}^{N}\left(\widetilde{G}_{n,k-1}\mathbb{I}_{\mathsf{C}}\right)}{\eta_{n,k-1}^{N}\left(\widetilde{G}_{n,k-1}\right)}\nonumber \\
 & \leq & \lambda\frac{\eta_{n,k-1}^{N}\left(\widetilde{G}_{n,k-1}V\right)}{\eta_{n,k-1}^{N}\left(\widetilde{G}_{n,k-1}\right)}+b\mathbb{I}_{\mathsf{C}^{(N)}}(\zeta_{n,k}^{\left(N\right)}),\label{eq:particle_chain_drift}\end{eqnarray}
where\[
\mathsf{C}^{(N)}:=\left\{ \zeta=\left(\xi^{1},...,\xi^{N}\right)\in\mathsf{X}^{N}:\exists j\in\{1,...,N\};\xi^{j}\in\mathsf{C}\right\} .\]
We are then faced with the issue of whether the re-weighting of $\eta_{n,k-1}^{N}$
by the potential function destroys the geometric drift of $M_{n,k}$:
for example one may ask when is it true that for some fixed $\delta\geq0$,
\begin{eqnarray}
\frac{\eta_{n,k-1}^{N}\left(\widetilde{G}_{n,k-1}V\right)}{\eta_{n,k-1}^{N}\left(\widetilde{G}_{n,k-1}\right)} & \leq & \left(1+\delta\right)\eta_{n,k-1}^{N}\left(V\right)\label{eq:G_destroys_drift_question}\end{eqnarray}
The remainder of this section examines this question and goes on to
look at some particular issues when $\mathsf{X}=\mathbb{R}^{d}$.
First, we have the following Lemma, which addresses a general scenario.
\begin{lem}
\label{lem:eta_f_g}For $f:\mathsf{X}\rightarrow(0,\infty)$, $g:\mathsf{X}\rightarrow(0,\infty)$,
two measurable functions and $\delta\in[0,\infty)$,

\[
\eta\left(fg\right)\leq\left(1+\delta\right)\eta(f)\eta\left(g\right)\]
for any $\eta\in\mathcal{P}(\mathsf{X})$ such that $\left|\eta\left(fg\right)\right|<+\infty,\left|\eta\left(f\right)\right|<+\infty,\left|\eta\left(g\right)\right|<+\infty$,
if

\begin{eqnarray}
\frac{\left[f(x)-f(x')\right]\left[g(x)-g(x')\right]}{\left[f(x)+f(x')\right]\left[g(x)+g(x')\right]} & \leq & \frac{\delta}{2+\delta},\quad\forall(x,x')\in\mathsf{X}^{2},\label{eq:charac_boltzman_gibbs_descend drift}\end{eqnarray}
and only if\begin{eqnarray*}
\frac{\left[f(x)-f(x')\right]\left[g(x)-g(x')\right]}{\left[f(x)+f(x')\right]\left[g(x)+g(x')\right]} & \leq & \frac{3\delta}{2+\delta},\quad\forall(x,x')\in\mathsf{X}^{2}.\end{eqnarray*}
\end{lem}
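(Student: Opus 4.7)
The plan is to reduce everything to the Chebyshev-type covariance identity
\[
\int\!\!\int [f(x)-f(x')][g(x)-g(x')]\,\eta(dx)\eta(dx') = 2\bigl[\eta(fg)-\eta(f)\eta(g)\bigr],
\]
together with its ``plus'' counterpart
\[
\int\!\!\int [f(x)+f(x')][g(x)+g(x')]\,\eta(dx)\eta(dx') = 2\bigl[\eta(fg)+\eta(f)\eta(g)\bigr],
\]
each of which falls out immediately by expanding the bracketed products and using that $\eta\otimes\eta$ factorises. Under the finiteness hypotheses on $\eta(f)$, $\eta(g)$ and $\eta(fg)$, all relevant integrals are finite, so there are no integrability subtleties to worry about.

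For the sufficient direction I would multiply the pointwise hypothesis
\[
[f(x)-f(x')][g(x)-g(x')] \leq \tfrac{\delta}{2+\delta}\,[f(x)+f(x')][g(x)+g(x')]
\]
by the non-negative measure $\eta(dx)\eta(dx')$, integrate, and apply the two identities above to obtain
\[
2\bigl[\eta(fg)-\eta(f)\eta(g)\bigr] \leq \tfrac{2\delta}{2+\delta}\bigl[\eta(fg)+\eta(f)\eta(g)\bigr].
\]
Clearing denominators and collecting like terms collapses this to $2\eta(fg)\leq(2+2\delta)\eta(f)\eta(g)$, which is the claimed inequality. The number $\delta/(2+\delta)$ is calibrated precisely so that this algebraic collapse occurs.

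For the necessary direction I would specialise to the equal-weight two-point measure $\eta=\tfrac12\delta_x+\tfrac12\delta_{x'}$ for an arbitrary pair $(x,x')\in\mathsf{X}^2$. Writing out $\eta(f)$, $\eta(g)$, $\eta(fg)$ and substituting the elementary identity
\[
2[f(x)g(x)+f(x')g(x')] = [f(x)-f(x')][g(x)-g(x')] + [f(x)+f(x')][g(x)+g(x')],
\]
the assumed integral inequality $\eta(fg)\leq(1+\delta)\eta(f)\eta(g)$ reduces after cancellation to the pointwise bound \emph{ratio}~$\leq\delta$.

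The only real obstacle is that the target is $\tfrac{3\delta}{2+\delta}$ and these two numbers are not comparable for all $\delta$. For $\delta\in[0,1]$ one has $\delta\leq\tfrac{3\delta}{2+\delta}$ and the argument closes at once. For $\delta>1$, however, $\delta>\tfrac{3\delta}{2+\delta}$, so the equal-weight two-point test is not by itself strong enough; one must also invoke the trivial universal bound \emph{ratio}~$\leq 1$, which holds because strict positivity of $f,g$ gives
\[
[f(x)+f(x')][g(x)+g(x')]-[f(x)-f(x')][g(x)-g(x')] = 2f(x)g(x')+2f(x')g(x) > 0.
\]
The conclusion then follows from the elementary inequality $\min(\delta,1)\leq\tfrac{3\delta}{2+\delta}$, valid for every $\delta\geq 0$ (with equality at $\delta=1$): for $\delta\leq 1$ this reduces to $\delta(2+\delta)\leq 3\delta$, and for $\delta>1$ to $2+\delta\leq 3\delta$.
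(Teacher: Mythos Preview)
Your proof is correct. The sufficiency argument is essentially identical to the paper's: the paper packages your two covariance identities into the single formula
\[
(1-\epsilon)\eta(fg)-(1+\epsilon)\eta(f)\eta(g)=\tfrac{1}{2}\int\!\!\int\bigl([f-f'][g-g']-\epsilon[f+f'][g+g']\bigr)\,\eta\otimes\eta
\]
with $\epsilon=\delta/(2+\delta)$, but the content is the same.

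For the necessity direction, both you and the paper use the same equal-weight two-point measure, but the algebra diverges. The paper keeps the combined identity above and, when evaluating the double integral, bounds the diagonal contribution crudely via $f(y)g(y)+f(y')g(y')\leq[f(y)+f(y')][g(y)+g(y')]$; this overcount of the diagonal terms is precisely the source of the factor~$3$ in $3\delta/(2+\delta)$. You instead compute the two-point case exactly and obtain the sharper pointwise conclusion \emph{ratio}~$\leq\delta$, then have to pair this with the trivial bound \emph{ratio}~$\leq 1$ and the elementary inequality $\min(\delta,1)\leq 3\delta/(2+\delta)$ to match the stated conclusion. Your route therefore actually proves a stronger necessary condition (namely \emph{ratio}~$\leq\min(\delta,1)$) and then throws away information to recover the lemma as written; the paper's route goes directly to the weaker target. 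Both are perfectly valid.
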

\begin{proof}
For any $\eta$, $f$, $g$ as specified in the statement and $\epsilon\in(0,1),$
consider the identity:\begin{eqnarray*}
 &  & (1-\epsilon)\eta\left(fg\right)-(1+\epsilon)\eta(f)\eta\left(g\right)\\
 &  & =\frac{1}{2}\int_{\mathsf{X}}\int_{\mathsf{X}}\left(\left[f(x)-f(x')\right]\left[g(x)-g(x')\right]-\epsilon\left[f(x)+f(x')\right]\left[g(x)+g(x')\right]\right)\eta(dx)\eta(dx')\\
 &  & =\frac{1}{2}\int_{\mathsf{X}^{2}}\left(\left[f(x)-f(x')\right]\left[g(x)-g(x')\right]-\epsilon\left[f(x)+f(x')\right]\left[g(x)+g(x')\right]\right)\eta(dx)\otimes\eta(dx'),\end{eqnarray*}
where the final inequality is due to Fubini's theorem, which is applicable
under the hypotheses of the lemma. The sufficiency part then follows
directly upon setting\begin{eqnarray*}
\epsilon=\frac{\delta}{2+\delta} & \Leftrightarrow & \left(1+\delta\right)=\frac{1+\epsilon}{1-\epsilon}.\end{eqnarray*}
For the necessity part, suppose on the contrary that there exists
$(y,y')\in\mathsf{X}^{2}$ such that \begin{eqnarray*}
\left[f(y)-f(y')\right]\left[g(y)-g(y')\right] & > & \frac{3\delta}{2+\delta}\left[f(y)+f(y')\right]\left[g(y)+g(y')\right],\end{eqnarray*}
 then setting $\eta=\frac{1}{2}\left[\delta_{y}+\delta_{y'}\right]$
and $\epsilon=\frac{\delta}{2+\delta}\geq0$, we obtain

\begin{eqnarray*}
 &  & \int_{\mathsf{X}}\int_{\mathsf{X}}\left(\left[f(x)-f(x')\right]\left[g(x)-g(x')\right]-\epsilon\left[f(x)+f(x')\right]\left[g(x)+g(x')\right]\right)\eta(dx)\eta(dx')\\
 &  & =\int_{\mathsf{X^{2}}}\left(\left[f(x)-f(x')\right]\left[g(x)-g(x')\right]-\epsilon\left[f(x)+f(x')\right]\left[g(x)+g(x')\right]\right)\frac{1}{4}\left[\delta_{y}(dx)\otimes\delta_{y}(dx')+\delta_{y'}(dx)\otimes\delta_{y'}(dx')\right]\\
 &  & +\int_{\mathsf{X^{2}}}\left(\left[f(x)-f(x')\right]\left[g(x)-g(x')\right]-\epsilon\left[f(x)+f(x')\right]\left[g(x)+g(x')\right]\right)\frac{1}{4}\left[\delta_{y}(dx)\otimes\delta_{y'}(dx')+\delta_{y'}(dx)\otimes\delta_{y}(dx')\right]\\
 &  & =-\epsilon\left[f(y)g(y)+f(y')g(y')\right]\\
 &  & +\frac{1}{2}\left(\left[f(y)-f(y')\right]\left[g(y)-g(y')\right]-\epsilon\left[f(y)+f(y')\right]\left[g(y)+g(y')\right]\right)\\
 &  & \geq\frac{1}{2}\left(\left[f(y)-f(y')\right]\left[g(y)-g(y')\right]-3\epsilon\left[f(y)+f(y')\right]\left[g(y)+g(y')\right]\right)>0,\end{eqnarray*}
which completes the proof.
\end{proof}
Note that the sufficient condition is always met, for example, when
$g=\varphi\circ f$ for some positive, strictly decreasing and invertible
function $\varphi$. The factor of $3$ in the necessity part of Lemma
\ref{lem:eta_f_g} arises from considering a suitable $\eta$ with
$N=2$ support points and is illustrative in the case of interest
where each of the measures $\eta_{n,k}^{N}$ is atomic. The following
section further explores the necessity part of Lemma \ref{lem:eta_f_g}
in the case of $\mathsf{X}=\mathbb{R}^{d}$, which is of interest
in the context of \citep{smc:meth:DDJ06}. The purpose of this next
section is to show that in order for (\ref{eq:G_destroys_drift_question})
to hold for any $\eta_{n,k-1}^{N}$, it is necessary that there is
a very specific relationship holds between $G$ and $V$.

\subsection{When $\mathsf{X}=\mathbb{R}^{d}$}

Let $\mathsf{X}=\mathbb{R}^{d}$ and let $\mathcal{B}\mathsf{\left(X\right)}$
be the corresponding Borel $\sigma$-algebra. We denote by $S^{d-1}$
the unit sphere in $\mathbb{R}^{d}$ and for $z$ in the image of
$V$ (resp. $G$), define the contour manifolds

\begin{eqnarray*}
\mathcal{V}_{z} & := & \left\{ x\in\mathbb{R}^{d}:V(x)=z\right\} \quad\text{and}\quad\mathcal{G}_{z}:=\left\{ x\in\mathbb{R}^{d}:G(x)=z\right\} .\end{eqnarray*}
We denote by $B(c,r)$ the closed ball of radius $r$ and centered
at $c$. For $x,x'\in\mathbb{R}^{d}$, let $n(x):=\frac{x}{\left|x\right|}$
and let $\ell_{x'x}$ be the line segment with end points $x$ and
$x'$. Consider the following collection of assumptions.

\begin{condition}\label{hyp:R^d_V_and_G}
\begin{itemize}
\item $G:\mathbb{R}^{d}\rightarrow(0,+\infty)$ and $V:\mathbb{R}^{d}\rightarrow[1,+\infty)$
have continuous first derivatives
\item There exist $r_{0}>0$ and strictly positive and non-decreasing functions
$\phi_{V}:(0,\infty)\rightarrow(0,\infty]$ and $\phi_{G}:(0,\infty)\rightarrow(0,\infty]$
such that whenever $\left|x\right|\geq r_{0}$,\begin{eqnarray}
n(x)\cdot\nabla\log V(x) & \geq & \phi_{V}(\left|x\right|)\quad\text{and}\quad n(x)\cdot\nabla\log G(x)\leq-\phi_{G}(\left|x\right|)\label{eq:V_growth_rate_pre}\end{eqnarray}

\end{itemize}
\end{condition}The conditions of equation (\ref{eq:V_growth_rate_pre})
imply that for all $\delta>0$ and whenever $\left|x\right|\geq r_{0}$,

\begin{eqnarray}
\frac{V\left(x+\delta n(x)\right)}{V(x)} & \geq & \exp\left[\phi_{V}(\left|x\right|)\delta\right],\quad\text{and}\quad\frac{G\left(x+\delta n(x)\right)}{G(x)}\leq\exp\left[-\phi_{G}(\left|x\right|)\delta\right].\label{eq:V_growth_rate_hyp}\end{eqnarray}
The assumptions of (\ref{eq:V_growth_rate_pre}) are largely inspired
by assumptions on probability densities used to verify geometric ergodicity
of certain Metropolis-Hastings kernels \citep{mc:theory:rt96,mc:theory:JH00}
and are realistic in the context of \citep{smc:meth:DDJ06}. It follows
immediately from the arguments of \citet[proof of theorem 2.1]{mc:theory:rt96}
transferred to $G$ and $V$, that for $|x|$ large enough, the contour
manifolds of $G$ and $V$ which contain $x$ are parameterizable
by the unit sphere in the sense that:
\begin{itemize}
\item For each $z>\sup_{\left|y\right|\leq r_{0}}V(y)$ there exists a bijection
between $S^{d-1}$ and $\mathcal{V}_{z}$, and for each $z'<\inf_{\left|y\right|\leq r_{0}}G(y)$
there exists a bijection between $S^{d-1}$ and $\mathcal{G}_{z'}$,
such that, \begin{equation}
\mathcal{V}_{z}=\left\{ w_{z}(\zeta)\zeta\;:\zeta\in S^{d-1}\right\} ,\quad\text{and\quad}\mathcal{G}_{z'}=\left\{ h_{z'}(\zeta)\zeta\;:\zeta\in S^{d-1}\right\} ,\quad,\label{eq:map_S_to_C_V}\end{equation}
where $w_{z}(\cdot)$ and $h_{z}(\cdot)$ are positive and continuous
functions on $S^{d-1}$. Furthermore, $\mathcal{V}_{z}\cap B(0,r_{0})=G_{z'}\cap B(0,r_{0})=\emptyset$.
\end{itemize}
In order to describe the relationship between contour manifolds of
$V$ and $G$ which intersect at some point, we introduce the function
$\psi:\mathbb{R}^{d}\rightarrow[0,\infty]$ defined by\begin{eqnarray}
\psi(x) & := & \sup_{\zeta\in S^{d-1}}\left(\left|h_{G(x)}(\zeta)\right|-\left|w_{V(x)}(\zeta)\right|\right),\label{eq:psi_defn}\end{eqnarray}
which implicitly depends on $G$ and $V$. We have the following proposition. 
\begin{prop}
\label{pro:G_and_V_R^d}Assume $G:\mathsf{X}\rightarrow(0,\infty)$
and $V:\mathsf{X}\rightarrow[1,\infty)$ satisfy (A\ref{hyp:G_bounded-1})
and (A\ref{hyp:R^d_V_and_G}) with\begin{eqnarray}
 &  & \lim_{r\rightarrow\infty}\left(\left[\inf_{s\geq r}\phi_{V}(s)\right]\wedge\left[\inf_{s\geq r}\phi_{G}(s)\right]\right)\inf_{\left|x\right|\geq r}\psi(x)=+\infty.\label{eq:FK_drift_necessary}\end{eqnarray}
 Then for any $0\leq\delta<1$, there exists an atomic $\eta\in\mathcal{P}(\mathsf{X})$
such that \[
\frac{\eta\left(GV\right)}{\eta(G)}>(1+\delta)\eta\left(V\right).\]
\end{prop}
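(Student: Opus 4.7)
The plan is to construct a two-point atomic measure $\eta = \tfrac{1}{2}(\delta_{y_1}+\delta_{y_2})$ on which $V$ and $G$ are positively correlated, in the sense that both take substantially larger values at $y_2$ than at $y_1$. Writing $A' := V(y_2)/V(y_1)$ and $B' := G(y_2)/G(y_1)$, a direct calculation gives
\[
\frac{\eta(GV)}{\eta(G)\eta(V)} \;=\; \frac{2(1 + A'B')}{(1+A')(1+B')},
\]
and one checks that this ratio is non-decreasing in each of $A',B'$ whenever both exceed $1$, with limit $2$ as $A',B'\to\infty$. Solving $2(1+N^2)/(1+N)^2 = 1+\delta$ gives the critical value $N_\delta := (1+\sqrt{\delta})/(1-\sqrt{\delta})$, so it suffices to exhibit $y_1,y_2$ with $A' \wedge B' \geq N$ for some fixed $N > N_\delta$.

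For the geometric construction I would pick $x$ with $|x|$ large and, invoking the definition of $\psi$, a direction $\zeta^*\in S^{d-1}$ with $h - w \geq \psi(x)/2$, where $h := h_{G(x)}(\zeta^*)$ and $w := w_{V(x)}(\zeta^*)$. Along the ray in direction $\zeta^*$, the $V$-contour through $x$ lies at radius $w$ and the $G$-contour through $x$ lies at radius $h > w$. Because $V$ is radially increasing and $G$ radially decreasing on $\{|\cdot|\geq r_0\}$, any point $y_2 = r\zeta^*$ with $w < r < h$ satisfies both $V(y_2) > V(x)$ and $G(y_2) > G(x)$. Taking $y_1 = x$ and $y_2 = \tfrac{w+h}{2}\zeta^*$, and integrating (\ref{eq:V_growth_rate_pre}) along the radial segment from $w\zeta^*$ outward to $y_2$ and from $h\zeta^*$ inward to $y_2$, the monotonicity of $\phi_V,\phi_G$ yields
\[
A' \;\geq\; \exp\bigl(\phi_V(w)\,\psi(x)/4\bigr), \qquad B' \;\geq\; \exp\bigl(\phi_G((w+h)/2)\,\psi(x)/4\bigr).
\]

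It remains to argue that both $A'$ and $B'$ exceed $N$ once $|x|$ is taken sufficiently large, and this is where hypothesis (\ref{eq:FK_drift_necessary}) is used. The contours $\mathcal{V}_{V(x)}$ and $\mathcal{G}_{G(x)}$ themselves retreat to infinity as $|x|\to\infty$: since $V$ is continuous (hence bounded on compacts) and grows to infinity radially, $\inf_{\zeta}w_z(\zeta)\to\infty$ as $z\to\infty$, and likewise $\inf_{\zeta}h_z(\zeta)\to\infty$ as $z\to 0$. Consequently $w$ and $(w+h)/2$ also tend to infinity, and (\ref{eq:FK_drift_necessary}) combined with the monotonicity of $\phi_V,\phi_G$ forces both exponents above to diverge. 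The main technical point is precisely this last step: one must check that the \emph{starting radii} of the radial integrations (not merely $|x|$ itself) grow with $|x|$, so that the hypothesis on the product $(\phi_V\wedge\phi_G)\cdot\psi$ can in fact be invoked at the appropriate radius along the integration ray.
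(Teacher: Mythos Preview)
Your proposal is correct and follows essentially the same route as the paper: construct a two-point measure supported on a radial segment between the $V$- and $G$-contours through a far-out point, and use the radial growth/decay bounds from (A\ref{hyp:R^d_V_and_G}) together with hypothesis (\ref{eq:FK_drift_necessary}) to force both ratios large. The only cosmetic differences are that the paper invokes the necessity part of Lemma~\ref{lem:eta_f_g} rather than computing $\eta(GV)/[\eta(G)\eta(V)]$ directly, and it ensures $w>r$ by requiring $V(y)>\sup_{|u|\leq r}V(u)$ rather than arguing that the contours retreat to infinity; both amount to the same thing.
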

\begin{proof}
In outline, the proof involves showing that if the condition in (\ref{eq:FK_drift_necessary})
holds, then for all $\epsilon\in[0,1$), there exists $(y,y')\in\mathsf{X}^{2}$
such that\[
\left(\frac{G(y)-G(y')}{G(y)+G(y')}\right)\left(\frac{V(y)-V(y')}{V(y)+V(y')}\right)>\epsilon,\]
and then employing Lemma \ref{lem:eta_f_g}. 

Firstly, suppose (\ref{eq:FK_drift_necessary}) fails to hold. Then
fix arbitrarily $\epsilon\in[0,1)$ and let $r\geq r_{0}$ be large
enough that\begin{eqnarray*}
\left(\left[\inf_{s\geq r}\phi_{V}(s)\right]\wedge\left[\inf_{s\geq r}\phi_{G}(s)\right]\right)\inf_{\left|x\right|\geq r}\psi(x) & \geq & 2\log\left(\dfrac{1+\sqrt{\epsilon}}{1-\sqrt{\epsilon}}\right),\end{eqnarray*}
and then let $y$ be any point such that $\left|y\right|>r$, $V(y)>\sup_{\left|u\right|\leq r}V(u)$
and $\psi(y)>0$ (such $r$ and $y$ exist due to the hypotheses of
equation (\ref{eq:V_growth_rate_pre}) and the assumed failure of
(\ref{eq:FK_drift_necessary})). Then recalling the definition of
$\psi$ in (\ref{eq:psi_defn}), and as, for fixed $y$, $h_{G(y)}(\cdot)$
and $w_{V(y)}(\cdot)$ are continuous functions, there exists $\zeta\in S^{d-1}$
such that $\left|h_{G(y)}(\zeta)\right|-\left|w_{V(y)}(\zeta)\right|=\psi(y)>0$.
With a slight abuse, let $x:=h_{G(y)}(\zeta)\zeta$, $x':=w_{V(y)}(\zeta)\zeta$,
and let $y':=\dfrac{1}{2}\left(x+x'\right)$. It follows that the
line segment $\ell_{x'x}$ lies on a ray and $\left|x\right|-\left|y'\right|=\left|y'\right|-\left|x'\right|=\psi(y)/2>0$.
Observe that under the implications of (\ref{eq:V_growth_rate_pre})
stated before the proposition, by construction $V(x)>V(y')>V(x')=V(y)$
and $G(x')>G(y')>G(x)=G(y)$. It must also be the case that $\left|x'\right|>r$,
as otherwise $V(y)=V(x')\leq\sup_{\left|u\right|\leq r}V(u)$, contradicting
the definition of $y$. The situation is illustrated in Figure \ref{fig:contours}.

Due to the hypothesis of equation (\ref{eq:V_growth_rate_hyp}) and
the definition of $r$, we then have

\begin{eqnarray}
\frac{V\left(y\right)}{V(y')}=\frac{V\left(x'\right)}{V(y')} & \leq\exp\left[-\phi_{V}(\left|x'\right|)\dfrac{\psi(y)}{2}\right] & \leq\exp\left[-\frac{1}{2}\left[\inf_{s\geq r}\phi_{V}(s)\inf_{\left|u\right|\geq r}\psi(u)\right]\right]\nonumber \\
 &  & \leq\left(\dfrac{1-\sqrt{\epsilon}}{1+\sqrt{\epsilon}}\right),\label{eq:V_necessary}\end{eqnarray}
and similarly,\begin{eqnarray}
\frac{G\left(y\right)}{G(y')}=\frac{G\left(x\right)}{G(y')} & \leq\exp\left[-\phi_{G}(\left|y'\right|)\dfrac{\psi(y)}{2}\right] & \leq\exp\left[-\frac{1}{2}\left[\inf_{s\geq r}\phi_{G}(s)\inf_{\left|u\right|\geq r}\psi(u)\right]\right]\nonumber \\
 &  & \leq\left(\dfrac{1-\sqrt{\epsilon}}{1+\sqrt{\epsilon}}\right).\label{eq:G_necessary}\end{eqnarray}
It follows from (\ref{eq:V_necessary})-(\ref{eq:G_necessary}) that\begin{eqnarray*}
\frac{1-\dfrac{V(y)}{V(y')}}{1+\dfrac{V(y)}{V(y')}}\geq\sqrt{\epsilon} & \quad\text{and}\quad & \frac{1-\dfrac{G(y)}{G(y')}}{1+\dfrac{G(y)}{G(y')}}\geq\sqrt{\epsilon}.\end{eqnarray*}
The proof is complete upon noting that $\epsilon$ was chosen arbitrarily
in $[0,1)$ and applying the necessity part of Lemma \ref{lem:eta_f_g}.
\end{proof}
\begin{figure}
\begin{minipage}[t]{1\columnwidth}%
\centering\includegraphics[bb=150bp 490bp 470bp 780bp,clip,scale=0.7]{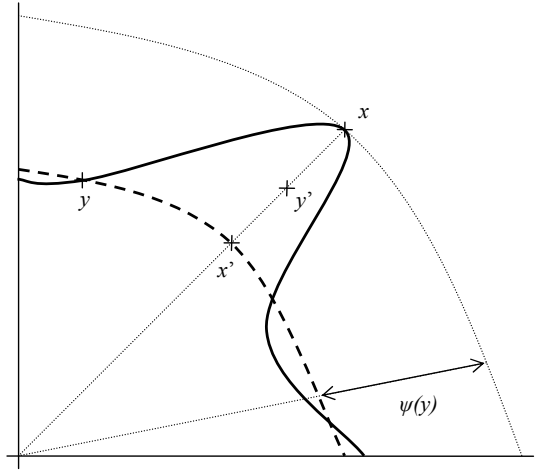}%
\end{minipage}

\caption{Solid line: $\mathcal{G}_{G(y)}$. Dashed line: $\mathcal{V}_{V(y)}$.
Radial distance of $\psi(y)$ outwards from $\mathcal{G}_{G(y)}$
is indicated.\label{fig:contours}}

\end{figure}

We may then formulate the following example.
\begin{cor}
\label{cor:V_and_G_R^d}Let $\mathsf{X}=\mathbb{R}^{2}$. For any
$\epsilon>0$ , let $G$ and $V$ be defined by

\begin{eqnarray*}
V(x)=\exp\left(\left(x_{1}-\epsilon\right)^{2}+x_{2}^{2}\right) & \quad\text{and}\quad & G(x)=\exp\left(-\left[\left(x_{1}+\epsilon\right)^{2}+x_{2}^{2}\right]\right)\end{eqnarray*}
where $x=(x_{1,}x_{2})$. Then for any $0\leq\delta<1$, there exists
an atomic $\eta\in\mathcal{P}(\mathsf{X})$ such that \begin{eqnarray}
\frac{\eta(GV)}{\eta(G)} & > & (1+\delta)\eta(V).\label{eq:cor_V_and_G}\end{eqnarray}
\end{cor}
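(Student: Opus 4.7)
The plan is to apply the (contrapositive of the) necessity part of Lemma \ref{lem:eta_f_g} directly, by exhibiting two points $y,y'\in\mathbb{R}^{2}$ at which
\[
\frac{[V(y)-V(y')][G(y)-G(y')]}{[V(y)+V(y')][G(y)+G(y')]}
\]
can be made arbitrarily close to $1$, and then taking $\eta$ to be the uniform measure on these two atoms. The key geometric observation guiding the choice of points is that the level sets of $V$ are circles centred at $(\epsilon,0)$ while those of $G$ are circles centred at $(-\epsilon,0)$, so moving along the axis joining these two centres simultaneously drives $V$ upwards and $G$ upwards.

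Concretely, I would take $y:=(R,0)$ and $y':=(-R,0)$ for $R>\epsilon$, for which
\begin{eqnarray*}
V(y)=e^{(R-\epsilon)^{2}},\quad V(y')=e^{(R+\epsilon)^{2}},\quad G(y)=e^{-(R+\epsilon)^{2}},\quad G(y')=e^{-(R-\epsilon)^{2}}.
\end{eqnarray*}
A one-line calculation using $(R+\epsilon)^{2}-(R-\epsilon)^{2}=4R\epsilon$ gives
\begin{eqnarray*}
\frac{[V(y)-V(y')][G(y)-G(y')]}{[V(y)+V(y')][G(y)+G(y')]}=\left(\frac{1-e^{-4R\epsilon}}{1+e^{-4R\epsilon}}\right)^{2}\longrightarrow 1\quad\text{as }R\to\infty.
\end{eqnarray*}

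Since $3\delta/(2+\delta)<1$ for every $\delta\in[0,1)$, the left-hand side above exceeds $3\delta/(2+\delta)$ for all $R$ sufficiently large. Setting $\eta:=\tfrac{1}{2}(\delta_{y}+\delta_{y'})$ and invoking the contrapositive of the necessity part of Lemma \ref{lem:eta_f_g} (with the Lemma's parameter taken equal to $\delta/(2+\delta)$, so that $(1+\cdot)/(1-\cdot)=1+\delta$) then produces $\eta(GV)>(1+\delta)\eta(G)\eta(V)$, which rearranges to the claim (\ref{eq:cor_V_and_G}).

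One minor subtlety I would flag -- rather than a genuine obstacle -- is that Proposition \ref{pro:G_and_V_R^d} is not literally applicable to this $(V,G)$: a direct check shows $\psi((-R,0))=0$ for all large $R$, so $\inf_{|x|\geq r}\psi(x)\equiv 0$ and the growth hypothesis (\ref{eq:FK_drift_necessary}) fails. What the proof of that proposition actually uses, however, is just the existence of a single point $y$ with $\psi(y)>0$ in the regime where $\phi_{V},\phi_{G}$ have taken over; for $y=(R,0)$ one has $\psi(y)=4\epsilon$ together with $\phi_{V},\phi_{G}\sim 2R$, so the direct computation above is both cleaner and sufficient. I anticipate no serious technical step.
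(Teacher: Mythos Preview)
Your proof is correct and, in a precise sense, cleaner than the paper's. The paper argues by invoking Proposition \ref{pro:G_and_V_R^d}: it verifies (A\ref{hyp:R^d_V_and_G}) with $r_{0}=2\epsilon$ and $\phi_{G}(|x|)=\phi_{V}(|x|)=2(|x|-\epsilon)$, then computes $\psi\big((0,\sqrt{r^{2}-\epsilon^{2}})\big)=2\epsilon$ and appeals to the proposition. Your route bypasses Proposition \ref{pro:G_and_V_R^d} entirely, choosing the explicit atoms $y=(R,0)$, $y'=(-R,0)$ and applying the necessity direction of Lemma \ref{lem:eta_f_g} directly; the explicit evaluation $\big(\tfrac{1-e^{-4R\epsilon}}{1+e^{-4R\epsilon}}\big)^{2}\to 1$ does all the work.

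Your flagged remark is well taken and in fact identifies a genuine oversight in the paper's argument as written: for $x=(-R,0)$ the $G$-contour and $V$-contour through $x$ are internally tangent circles, so $\psi((-R,0))=0$ and hence $\inf_{|x|\geq r}\psi(x)=0$ for every $r$, so hypothesis (\ref{eq:FK_drift_necessary}) of Proposition \ref{pro:G_and_V_R^d} is \emph{not} satisfied by this example. As you observe, the proof of the proposition only really needs a single point $y$ with $\phi_{V}(|y|)\wedge\phi_{G}(|y|)$ times $\psi(y)$ arbitrarily large, and the paper's choice $y=(0,\sqrt{r^{2}-\epsilon^{2}})$ (or your $y=(R,0)$ with $\psi(y)=4\epsilon$) furnishes this; but strictly speaking the corollary cannot be deduced from the proposition as stated. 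Your direct computation sidesteps this entirely. One small quibble: the parenthetical about ``the Lemma's parameter taken equal to $\delta/(2+\delta)$'' is muddled --- just apply Lemma \ref{lem:eta_f_g} with its $\delta$ equal to the corollary's $\delta$, noting $3\delta/(2+\delta)<1$, and the contrapositive gives exactly what you need.
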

\begin{proof}
Elementary manipulations show that (A\ref{hyp:R^d_V_and_G}) holds
with $r_{0}=2\epsilon$ and taking $\phi_{G}(\left|x\right|)=\phi_{V}(\left|x\right|)=2(\left|x\right|-\epsilon)$
for $\left|x\right|\geq r_{0}$. For any $r\geq r_{0}$, consider
the contour manifolds $\mathcal{V}_{z}$ and $\mathcal{G}_{z'}$ for
$z=V\left(0,\sqrt{r^{2}-\epsilon^{2}}\right)$ and $z'=G\left(0,\sqrt{r^{2}-\epsilon^{2}}\right)$.
It is straightforward to check that $\psi\left(\left(0,\sqrt{r^{2}-\epsilon^{2}}\right)\right)=2\epsilon$
and the result follows from Proposition \ref{pro:G_and_V_R^d}. 
\end{proof}

This example serves to highlight that a very specific relationship
between the potential functions and the drift function is required
if inequalities of the form (\ref{eq:G_destroys_drift_question})
are to hold for $(1+\delta)\lambda<1$ and for all probability measures.
The application of section \ref{sec:Application} is one situation
where such a relationship holds. We note that it is possible to bound
$\bar{\mathbb{E}}_{\mu}\left[\eta_{n,k}^{N}\left(V\right)\right]$
without confirming (\ref{eq:G_destroys_drift_question}), by appealing
to convexity and the flattening property of the potential functions
combined with the geometric drift; but the application of section
\ref{sec:Application} will not need such an approach and so we do
not report these details here. On the other hand, if inequalities
of the form (\ref{eq:G_destroys_drift_question}) do hold with $(1+\delta)\lambda<1$,
one might then pursue accompanying minorization conditions for the
chain $\left\{ \zeta_{n,k}^{(N)};0\leq k\leq n\right\} $, but it
seems difficult to achieve this in such a way that the minorizing
constants do not degrade as $N$ increases. 

\bibliographystyle{plainnat}
\bibliography{SMC_stability}

\end{document}